\documentclass[12pt,halfline,a4paper]{ouparticle}

\usepackage[utf8]{inputenc} 
\usepackage[T1]{fontenc}    
\usepackage{hyperref}       
\hypersetup{
    colorlinks = true,
    linkcolor  = blue,   
    citecolor  = blue,   
    urlcolor   = blue    
}

\usepackage{url}            
\usepackage{float}
\usepackage{nicefrac}       
\usepackage{microtype}      
\usepackage{lipsum}
\usepackage{fancyhdr}       
\usepackage{amsfonts, amssymb, amsmath, amsthm, mathtools, bm}
\usepackage{tikz}
\usetikzlibrary{calc, decorations.pathreplacing, arrows.meta, positioning}
\usepackage{bbm}
\usepackage{graphicx}       
\graphicspath{{media/}}     
\usepackage[dvipsnames]{xcolor}
\usepackage{dsfont}

\usepackage{algorithm}    
\usepackage{algpseudocode}

\usepackage[style=authoryear, backend=biber, sorting=nyt, maxbibnames=5, maxcitenames=2, natbib]{biblatex}
\bibliography{references.bib}

\newtheorem{theorem}{Theorem}  
\theoremstyle{remark}
\newtheorem{remark}{Remark}  
\theoremstyle{plain}  
\newtheorem{lemma}{Lemma}    

\begin{document}

\title{Fast and scalable inference in \\hidden Markov models with Gaussian fields}
\author{
    Jan-Ole Fischer\thanks{
      Email: \href{mailto:jan-ole.fischer@uni-bielefeld.de}{\texttt{jan-ole.fischer@uni-bielefeld.de}}.
      ORCID: \href{https://orcid.org/0009-0004-1556-9053}{0009-0004-1556-9053}
    }\hspace{.2cm}\\
    Department of Business Administration and Economics, Bielefeld University
}

\abstract{
Hidden Markov models (HMMs) are powerful tools for analysing time series data that depend on discrete 
underlying but unobserved states.
As such, they have gained prominence across numerous empirical disciplines, in particular ecology, medicine, and economics.
However, the increasing complexity of empirical data is often accompanied by additional latent structure
such as spatial effects, temporal trends, or measurement perturbations.
Gaussian fields provide an attractive building block for incorporating such structured latent variation into HMMs.

Fast inference methods for Gaussian fields have emerged through the stochastic partial differential equation (SPDE) approach. Due to their sparse representation, these integrate well with novel frequentist estimation methods for random-effects models via the use of automatic differentiation and the Laplace approximation.
Scaling to high dimensions requires tools such as \texttt{(R)TMB} to exploit sparsity in the Hessian w.r.t.\ the latent variables --- a property satisfied by SPDE fields but violated by the HMM likelihood.
We present a modified 
forward algorithm to compute the HMM likelihood, constructing sparsity in the Hessian and 
consequently enabling fast and scalable inference. 

We demonstrate the practical feasibility and the usefulness through simulations and two case studies exploring the detection of stellar flares 
as well as modelling the movement of lions.
}

\date{\today}

\keywords{Markov-switching models, state-space models, Gaussian process, Laplace approximation, automatic differentiation}

\maketitle

\section{Introduction}

Over the past decade, hidden Markov models (HMMs) have become a powerful tool for modelling noisy time series driven by unobserved, serially correlated states. Their versatility has led to widespread adoption across disciplines such as ecology \citep{mcclintock2020uncovering,mews2022multistate}, finance \citep{liu2012stock,zhang2019high}, medicine \citep{amoros2019continuous,soper2020hidden}, and sports \citep{otting2023copula, michels2026integrating}.
Recent advances have focused on replacing restrictive and potentially unrealistic parametric assumptions with more flexible model components. A notable development is the integration of penalised splines into HMMs, first proposed by \citet{langrock2015nonparametric}. Subsequent work \citep{langrock2017markov,chen2023bayesian,feldmann2023flexible,koslik2025flexible,ammann2026non} has expanded this approach and focussed on selecting an adequate degree of smoothness through a mixed model representation \citep{koslik2024efficient, michelot2025hmmtmb} as well as multivariate extensions \citep{koslik2025tensor,michels2025nonparametric}.

While penalised splines have significantly advanced the flexibility of HMMs beyond traditional parametric formulations, their low-rank function approximations involve inherent trade-offs. Current implementations often rely on calculations with dense matrices, which can become computationally demanding as the dimensionality of the problem increases. Additionally, the penalties in penalised splines are primarily designed to enforce global function smoothness, which may not always capture more complex or domain-specific patterns in the data.
On the other hand, Gaussian processes (GPs) \citep{ohagan1978curve} --- also referred to as Gaussian \emph{fields} (GFs) in spatial applications \citep{cressie2015statistics} --- offer a different, rich framework for function approximation.
Through the choice of covariance kernels, GPs can encode prior knowledge about the underlying structure of the function --- such as periodicity (e.g.\ \citealp{foreman2017fast}), spatial correlations, or non-stationarity --- making them particularly well-suited for applications where the data exhibits such intricate patterns. This adaptability opens up new possibilities for modelling complex phenomena that may not be fully captured by penalised splines alone. 
While historically GPs have been limited by their poor scalability, \citet{lindgren2011explicit} developed a method to approximate Matérn fields on a triangulated mesh, transforming the finite-dimensional approximation into a Gaussian Markov random field (GMRF). This approach enables sparse computations and has gained popularity through the \texttt{R-INLA} \citep{lindgren2015bayesian} and \texttt{inlabru} \citep{bachl2019inlabru} \texttt{R} packages, the latter tailored for distributional regression models with latent Gaussian fields.

Integrating such fields into HMMs can be beneficial in two distinct scenarios. 
First, in cases where the state-dependent observation process within the HMM is perturbed by e.g.\ GPS measurement error or quasi-periodic trend \citep{mcclintock2012general, esquivel2025detecting}, this can be accounted for.
Second, in more standard mixed model settings --- such as those considered in \texttt{inlabru} for regression --- a component of the HMM (e.g., transition probabilities between states) can be related to a latent (spatial) Gaussian field. 
Fast frequentist likelihood-based inference for both scenarios is desirable but challenging due to the high-dimensional integration required to evaluate the likelihood function. The Template Model Builder 
framework and its associated \texttt{R} packages, \texttt{TMB} \citep{kristensen2016tmb} and \texttt{RTMB} \citep{kristensen2024rtmb}, address this challenge by leveraging automatic differentiation to perform the Laplace approximation of the high-dimensional integral. However, the approach requires computing the Hessian matrix of the log-likelihood with respect to the latent variables, which is only feasible in high-dimensional settings if the Hessian is a sparse matrix. Unfortunately, such a sparsity pattern is not present in HMMs, necessitating a tailored approximation of the HMM log-likelihood function.

In this paper, we propose such an approximation by modifying the forward algorithm to compute the HMM log-likelihood. This modification ensures local time-dependence only and consequently yields a sparse second derivative with respect to latent variables spaced out over time. Combined with the SPDE approach by \citet{lindgren2011explicit}, this enables the integration of Gaussian fields into HMMs and facilitates rapid, numerically efficient inference that scales well to high dimensions.
We demonstrate the practical utility and performance of our approach through two case studies: (1) modelling stellar flares, where brightness measurements are perturbed by quasi-periodic oscillations in star brightness, and (2) movement ecology, where a spatial field influences the probability of transitions between behaviours.

\section{Methods}
\label{sec: methods}

Before we detail the novel inference procedures outlined above, we shall briefly introduce the two model components required throughout this paper.

\subsection{Hidden Markov models}

A basic hidden Markov model comprises two stochastic processes: an observed process $\{Y_t\}$ and a hidden process $\{S_t\}$. The unobserved process is an $N$-state Markov chain on the state space $\{1, \ldots, N\}$ that satisfies the Markov property $$\Pr({S}_{t} = {s}_{t} \mid {S}_{t-1} = {s}_{t-1},\ldots,{S}_1 = {s}_1) = \Pr({S}_{t} = {s}_{t} \mid {S}_{t-1} = {s}_{t-1}).$$
Due to this Markovian nature, the state process is fully specified by its initial distribution $\bm{\delta}^{(1)} = \bigl(\Pr(S_1 = 1), \dotsc, \Pr(S_1 = N) \bigr)$ and one-step transition probabilities $\gamma_{ij}^{(t)} = \Pr(S_{t} = j \mid S_{t-1} = i)$ which are commonly summarised in a transition probability matrix (t.p.m.) 
$$
\bm{\Gamma}^{(t)} = (\gamma_{ij}^{(t)})_{i,j = 1, \dots, N}.
$$
The time index $t$ is included here because in many applications covariates $\bm{z}_t \in \mathbb{R}^p$, $t = 1, \dotsc, T$, affect the probability of transitioning from one state to another.
If at time $t-1$, the process is in state $i$, the categorical distribution of states for time $t$ is given by $(\gamma_{i1}^{(t)}, \dots, \gamma_{iN}^{(t)})$. The covariate influence on this categorical distribution is usually modelled using a multinomial logistic regression, achieved by appling the inverse multinomial logistic link function (softmax) to linear predictors for each row of the t.p.m.,
$$
\gamma_{ij}^{(t)} = \frac{\exp(\eta_{ij}^{(t)})}{\sum_{k=1}^N \exp(\eta_{ik}^{(t)})},
$$
setting $\eta_{ii}^{(t)} = 0$ (reference category). The linear predictors $\eta_{ij}^{(t)}$ may include any type of parametric relationship (e.g.\ linear, quadratic, interaction), simple random effects or, as will be considered here, \emph{random fields}.

Given a realisation of the hidden state $S_t$, the distribution of $Y_t$ is independent of all other observations and hidden states. More formally, if $f$ denotes a general density,
$$
f(y_t \mid s_1, \dotsc, s_T, y_1, \dotsc, y_{t-1}, y_{t+1}, \dots, y_T) = f(y_t \mid s_t).
$$
As $S_t$ only takes on finitely many values, to simplify notation we simply denote the corresponding densities as $f_j(y_t) = f(y_t \mid S_t = j)$, $j = 1, \dotsc, N$. If $y_t$ is multivariate, a common assumption is that its components are conditionally indepent given the current state. Consequently, in this case $f_j(y_t)$ is the product of univariate densities.

\subsection{Gaussian processes with sparse precision}
\label{subsec:GPs}

Gaussian processes (GPs) provide a flexible framework for function modelling by specifying their distribution through a mean function and a covariance kernel. 
The defining characteristic of a \emph{Gaussian} process $u$ is that for any finite collection of evaluation points $r_1, \dots, r_l$, the corresponding function values follow a multivariate \emph{Gaussian} distribution with mean vector and covariance matrix determined by the mean function $m$ and covariance kernel $\mathcal{K}$, respectively. More formally,
\begin{equation}
\bigl(u(r_1), \dots, u(r_l)\bigr)^\top \sim 
\mathcal{N}(\bm m, \bm{K}),
\end{equation}
where $\bm m = \bigl(m(r_1), \dots, m(r_l)\bigr)^\top$, and $\bm{K}_{ij} = \mathcal{K}(r_i, r_j)$.

In spatial and spatio-temporal settings, this framework is commonly referred to as a Gaussian random field (GRF).
A major practical limitation of GP models is their computational cost. For $l$ evaluation points, likelihood evaluation and inference typically require $\mathcal{O}(l^3)$ operations and $\mathcal{O}(l^2)$ storage due to dense covariance matrices. 


However, \citet{lindgren2011explicit} showed that Gaussian fields with Matérn covariance can be represented as the weak solution to the stochastic partial differential equation (SPDE)
\begin{equation}
(\kappa^2 - \Delta)^{\alpha/2} u(\bm{r})
= \mathcal{W}(\bm{r}),
\label{eq:spde}
\end{equation}
where $\Delta$ denotes the Laplacian operator, $\mathcal{W}(\bm{r})$ is Gaussian white noise, $\kappa>0$ controls the spatial range, and $\bm r \in \mathbb{R}^d$. The parameter $\alpha$ determines the smoothness of the field and is usually fixed as it is poorly identified. For spatial dimension $d$, the corresponding Matérn smoothness parameter is given by $\nu = \alpha - d/2$.

Approximating the weak solution of \eqref{eq:spde} using a finite element basis representation,
$$
u(\bm{r}) \approx \sum_{i=1}^l \psi_i(\bm{r}) \, x_i,
$$
where $\{\psi_i\}_{i = 1, \dots, l}$ are locally supported basis functions defined on a triangulated mesh, yields a finite-dimensional Gaussian approximation of the Matérn field. Due to the local support of the basis functions and the locality of the differential operator, the resulting system matrices are sparse. Consequently, the coefficient vector $\bm{x} = (x_1,\dots, x_l)^\top$ follows a Gaussian Markov random field (GMRF) with sparse precision matrix. 
For example, for dimension $d = 2$ and $\alpha = 2$, we obtain the precision matrix
\begin{equation}
\label{eqn:SPDE_precision}
    \bm{Q}_{\tau, \kappa} = \tau^2 \bigl(\kappa^4 \bm{C} + 2 \kappa^2 \bm{G} + \bm{G}\bm{C}^{-1} \bm{G} \bigr),
\end{equation}
where $\tau > 0$ is an overall precision parameter that scales the driving white noise and thereby determines the marginal variance of the resulting field.
The $l \times l$ matrices $\bm{C}$ and $\bm{G}$ have entries
$$
C_{ij} = \int \psi_i(\bm{r}) \psi_j{(\bm{r})} \; d \bm{r},
\quad \text{and} \quad
G_{ij} = \int \nabla \psi_i(\bm{r})^\top \nabla \psi_j{(\bm{r})} \; d \bm{r},
$$
respectively. 
Using piecewise linear basis functions yields sparse matrices $\bm{C}$ and $\bm{G}$. In practice, $\bm C$ is often replaced by its lumped (diagonal) approximation such that $\bm{C}^{-1}$ is sparse (see \citealp{bolin2009wavelet} for justification).

The stochastic weights $\bm{x}$ then have multivariate Gaussian density
\begin{equation}
    \label{eqn:GMRF_density}
    (2\pi)^{-l / 2} \vert\bm{Q}_{\tau, \kappa}\vert^{1/2} \exp\bigl( -\tfrac{1}{2} (\bm{x} - \bm{\mu})^\top \bm{Q}_{\tau, \kappa} (\bm{x} - \bm{\mu})\bigr),
\end{equation}
where $\bm \mu$ is determined by the mean function $m$.
Evaluating this density is computationally efficient because the quadratic form only involves the non-zero entries of $\bm{Q}_{\tau,\kappa}$, and sparse Cholesky factorisation can be used to compute the log-determinant efficiently.

\subsection{Example model formulations}
\label{subsec: examples models}

Before turning to inference, we shall highlight two examples in which combining hidden Markov models (HMMs) with Gaussian fields or Gaussian processes is natural and beneficial.

\textbf{Example 1} Suppose the state-switching process of interest is not observed directly but is contaminated by an additive noise component. A natural model formulation for observations $y_1, \dots, y_T$ is then
$$Y_t = X_t + u_t, \qquad t = 1, \dots,T,$$
where $\{X_t\}$ follows an HMM and represents the latent process of interest. The noise component $u_t$ may be independent white noise, but it may also exhibit temporal structure, for example $u_t = u(t)$ with $u$ following a Gaussian process. The latter allows for smooth trends or temporally correlated disturbances that cannot be captured by independent errors. We consider such a model in the first case study.

\textbf{Example 2} In spatial settings, suppose we may observe not only $y_1, \dots, y_T$ but also corresponding spatial locations $\bm{r}_1, \dots, \bm{r}_T$, each in $\mathbb{R}^2$. The observations may then follow a standard state-dependent model,
$$Y_t \mid \{S_t = j\} \sim f_j(\cdot),$$
where $\{f_j\}_{j=1,\dots,N}$ are parametric state-dependent densities.
While the observation model remains parametric, the state process itself may depend on spatial location. For example, transition probabilities $\gamma_{ij}^{(t)}$ can be linked to linear predictors of the form
$$\eta_{ij}^{(t)} = \bm{\beta}_{ij}^\top \bm{z}_t + u(\bm{r}_t),$$
where $\bm{z}_t$ denotes observed covariates and $u(\bm{r}_t)$ is a spatial Gaussian field evaluated at location $\bm{r}_t$. This formulation allows behavioural switching dynamics to vary smoothly across space. In the second case study, we apply such a model to animal movement data to investigate how behavioural states of lions vary across the study area.

\subsection{Inference via marginal likelihood using (R)TMB}
\label{subsec:inference}

Conducting frequentist inference, i.e.\ maximum likelihood estimation, in any of the aformentioned models requires access to the likelihood function of the observations under the particular model of interest. 
The main challenge is that, although it is usually straightforward to compute the joint likelihood of the data and the latent variables, the latter must be integrated out to obtain the marginal likelihood of the data.
Precisely, letting $f$ denote a general density and using subscripts to denote parameter dependence, we require the marginal likelihood function 
\begin{equation}
\label{eqn: marginal likelihood}
\mathcal{L}(\bm{\theta}) = \int_{\mathbb{R}^l} f_\theta(\bm{y} \mid \bm{x}) f_\theta(\bm{x}) \; d\bm{x}
\end{equation}
of observations $\bm{y} = (y_1, \dots, y_T)^\top$. The vector $\bm{\theta} \in \mathbb{R}^p$ parameterises the t.p.m., the state-dependent distributions and the distribution of the Gaussian field. 
Equation \eqref{eqn: marginal likelihood} covers two distinct cases, corresponding to the example model formulations from Section~\ref{subsec: examples models} and case studies from Section~\ref{sec:case_studies}.
\begin{enumerate}
    \item In Example~1, $f_\theta(\bm{x})$ gives the HMM likelihood of a latent process $\bm{x} = (x_1, \dots, x_T)^\top$ that is perturbed by noise, thus in this case $l = T$. 
    $f_\theta(\bm{y} \mid \bm{x})$ is a GMRF density as in \eqref{eqn:GMRF_density} evaluated at $\bm y$ with mean $\bm x$, corresponding to a measurement model of the observations conditional on the process.
    \item In Example~2, $f_\theta(\bm{y} \mid \bm{x})$ is the HMM likelihood of obervations $y_1, \dots, y_T$ conditional on a latent Gaussian field $\bm{x} \in \mathbb{R}^l$, the latter with GMRF density $f_\theta(\bm{x})$ as in \eqref{eqn:GMRF_density}.
\end{enumerate}

Due to the structure of the HMM likelihood, the high-dimensional integral in Equation \eqref{eqn: marginal likelihood} is analytically intractable. 
A powerful tool 
for approximate frequentist inference is given by the so-called \textit{Laplace approximation} \citep{erkanli1994laplace, van2000asymptotic}. 
Omitting the dependence on the data $\bm{y}$ for notational simplicity and letting
\begin{equation}
\label{eqn:g}
    g(\bm{x}, \bm{\theta}) = - \log f_\theta(\bm{y} \mid \bm{x}) - \log f_\theta(\bm{x}),
\end{equation}
we can rewrite the marginal likelihood in Equation \eqref{eqn: marginal likelihood} as 
\begin{equation}
\label{eqn: marginal likelihood 2}
   \mathcal{L}(\bm{\theta}) = \int_{\mathbb{R}^l} \exp \bigl( - g(\bm{x, \bm{\theta}}) \bigr) \; d\bm{x}. 
\end{equation}

For fixed $\bm{\theta}$ the Laplace approximation now proceeds by performing a second-order Taylor approximation of $g$ about its minimum $\hat{\bm{x}}_{\theta} = \underset{\bm x}{\text{argmin}}\; g(\bm{x}, \bm{\theta})$, yielding the quadratic approximation
$$
g^*(\bm{x}, \bm{\theta}) = g(\hat{\bm{x}}_{\theta}, \bm{\theta}) + \tfrac{1}{2} (\bm{x} - \hat{\bm{x}}_{\theta})^\top \bm{H}_\theta (\bm{x} - \hat{\bm{x}}_{\theta}),
$$
where $\bm{H}_\theta = g^{\prime \prime}_{xx}(\hat{\bm{x}}_{\theta}, \bm{\theta})$ is the Hessian matrix of $g$ w.r.t.\ $\bm{x}$.
Plugging $g^*$ back into \eqref{eqn: marginal likelihood 2}, yields
$$
\mathcal{L}^*(\bm{\theta}) = \exp\bigl(-g(\hat{\bm{x}}_\theta, \bm{\theta} )\bigr) \int_{\mathbb{R}^l} \exp \bigl( -\tfrac{1}{2} (\bm{x} - \hat{\bm{x}}_{\theta})^\top \bm{H}_\theta (\bm{x} - \hat{\bm{x}}_{\theta}) \bigr) \; d\bm{x},
$$
where the latter part is a multivariate Gaussian integral with closed-form solution. Solving the integral, applying the logarithm and multiplying by $-1$, we arrive at the approximate marginal negative log-likelihood function
\begin{equation}
    \label{eqn: laplace approx log-likelihood}
    -\log \mathcal{L}^*(\bm{\theta}) = g(\hat{\bm{x}}_{\theta}, \bm{\theta}) - \tfrac{l}{2} \log(2 \pi) + \tfrac{1}{2} \log \vert \bm{H}_{\theta} \vert,
\end{equation}
where $\vert \cdot \vert$ denotes the determinant.
This procedure is conveniently automated through the Template Model Builder framework and its associated \texttt{R} package \texttt{TMB} \citep{kristensen2016tmb} and has recently become even more attractive through the \texttt{R} interface provided by the novel \texttt{RTMB} package \citep{kristensen2024rtmb}.
With the latter, the user simply needs to implement and supply the joint negative log-likelihood $g$ as an \texttt{R} function. 
Using reverse-mode automatic differentiation for the derivatives of $g$, the package then returns \eqref{eqn: laplace approx log-likelihood} along with its gradient w.r.t.\ $\bm\theta$. Each call to the objective function (or its gradient) then triggers an inner Newton optimisation to determine $\hat{\bm{x}}_{\theta}$, required for the Laplace approximation. Fitting the model then amounts to numerically minimising \eqref{eqn: laplace approx log-likelihood} using any suitable gradient based numerical minimisation routine, yielding a nested numerical optimisation procedure.
Due to the nature of the Laplace approximation, after the optimisation predicted latent variables $\hat{\bm x}_{\hat{\theta}}$, corresponding to the posterior mode, are immediately available. 
Furthermore, Template Model Builder provides an estimate of the joint precision matrix of the parameters and latent variables \citep{bates1988nonlinear}. Using this matrix, we can sample from the approximate Gaussian distribution and use the aquired samples to quantify uncertainty.

A crucial practical consideration is the structure of $\bm{H}_{\theta}$.
If the dimensionality $l$ of $\bm{x}$ is large, computation of $\bm{H}_{\theta}$ becomes prohibitive in general.
Only if many entries of $\bm{H}_{\theta}$ are exactly zero, i.e.\ it is a \emph{sparse} matrix, computations become feasible and fast. An entry $H_{ij}$ is zero if $\partial^2 g / \partial x_i \partial x_j = 0$ everywhere.
This requirement makes the SPDE approach outlined in Section \ref{subsec:GPs} particularly attractive. The term corresponding to the GMRF log-likelihood will be the logarithm of $\eqref{eqn:GMRF_density}$ and hence take the form
$$
-\tfrac{1}{2} (\bm{x} - \bm{\mu})^\top \bm{Q}_{\tau, \kappa} (\bm{x} - \bm{\mu}) + \text{const.}
$$
The second derivative of this summand w.r.t.\ $\bm{x}$ (or $\bm{\mu}$) is simply $- \bm{Q}_{\tau, \kappa}$, a sparse matrix by construction.
If such a pattern is present, it is automatically detected by (\texttt{R})\texttt{TMB} and exploited for all required computations. Specifically, only non-zero cross-derivatives are computed and the linear algebra required to compute \eqref{eqn: laplace approx log-likelihood} is performed using a sparse Cholesky factor.

In either of the two cases from Section~\ref{subsec: examples models}, the second summand in \eqref{eqn:g} corresponds to the HMM log-likelihood. Evaluating this term requires marginalisation over the latent state sequence $s_1, \dots, s_T$ which, as we shall see, results in a dense Hessian matrix.
Naively trying to apply the Laplace approximation is hence deemed to fail as evaluating $\bm{H}_\theta$ and performing computations with it is prohibitive.
Hence, in the next section we develop an approximation to the algorithm required to evaluate the HMM log-likelihood in order to mitigate a dense Hessian matrix. 

\subsection{A banded forward algorithm}
\label{subsec: forward algo}

Given our two examples, we either need to compute the HMM likelihood of a latent process $x_1, \dots, x_T$, or of observations $y_1, \dots, y_T$ conditional on a random field $\bm{x} \in \mathbb{R}^l$.
To simplify notation, in this section we focus on the first case and use the term ``observations'' for $x_1, \dots, x_T$.
Note, however, that this does not restrict generality. The main objective is to construct a log-likelihood function that yields zero cross-derivatives w.r.t.\ any two quantities entering log-likelihood summands at time $t$ and $t'$ if $\vert t - t'\vert$ is sufficiently large.

Computing the likelihood of $x_1, \dotsc, x_T$ under an HMM requires marginalising over all possible unobserved states $s_1, \dots, s_T$ that could have generated this state-dependent sequence. 
The task of summing out the states is made feasible through the so-called forward algorithm \citep{zucchini2016hidden}. Defining the \emph{forward variables}
$$
\alpha_t(j) = f(x_1, \dots, x_t, s_t = j), \quad \bm\alpha_t = \bigl(\alpha_t(1), \dots, \alpha_t(N) \bigr),
$$
and the diagonal matrix $\bm{P}(x_t) = \text{diag}\bigl( f_1(x_t), \dots, f_N(x_t) \bigr)$, we have
$$
\bm\alpha_1 = \bm\delta^{(1)} \bm{P}(x_1), \quad \text{and} \quad \bm\alpha_t = \bm\alpha_{t-1} \bm{\Gamma}^{(t)}\bm{P}(x_t).
$$
Running the recursion from $t = 1$ to $t = T$ and building the sum of the entries of the last forward variable yields the likelihood of interest. However, this approach can suffer from numerical under- or overflow, and thus it is common to consider the following scaling strategy \citep{lystig2002exact}. Let $\bm{1} = (1, \dots, 1)$ and define the \emph{scaled} forward variables
$$
\bm\phi_t = \frac{\bm\alpha_t}{\bm\alpha_t \bm{1}^\top},
$$
and thus $\phi_t(j) = f(s_t=j \mid x_1, \dots, x_t)$ by the definition of $\alpha_t(j)$. Now it can be shown that
$$
\bm\phi_t = \frac{\bm\phi_{t-1} \bm{\Gamma}^{(t)}\bm{P}(x_t)}{\bm\phi_{t-1} \bm{\Gamma}^{(t)}\bm{P}(x_t) \bm{1}^\top}, \quad \text{and} \quad
\bm\phi_{t-1} \bm{\Gamma}^{(t)}\bm{P}(x_t) \bm{1}^\top = f(x_t \mid x_1, \dots, x_{t-1}).
$$
Observing that $\bm\phi_t$, $t = 1, \dots, T$, can also be calculated recursively, this yields an alternative way to calculate the likelihood based on the product rule
$$
f(x_1, \dots, x_T) = \underbrace{f(x_1)}_{\bm{\delta}^{(1)} \bm{P}(x_1) \bm{1}^\top} \prod_{t=2}^T \underbrace{f(x_t \mid x_1, \dots, x_{t-1})}_{\bm\phi_{t-1} \bm{\Gamma}^{(t)}\bm{P}(x_t) \bm{1}^\top},
$$
and taking logarithms, we obtain a summation over all time points:
\begin{equation}
\label{eqn: log_likelihood}
    \log f(x_1, \dots, x_T) = \log \underbrace{f(x_1)}_{\bm{\delta}^{(1)} \bm{P}(x_1) \bm{1}^\top} + \sum_{t=2}^T \log \underbrace{f(x_t \mid x_1, \dots, x_{t-1})}_{\bm\phi_{t-1} \bm{\Gamma}^{(t)}\bm{P}(x_t) \bm{1}^\top}.
\end{equation}
The algorithm for computing \eqref{eqn: log_likelihood} then involves a recursion over time where at each time point we a) update the scaled forward variable based on the previous one and b) add the corresponding log-likelihood contribution to a running summation.
The key observation here is that each term in the summation in \eqref{eqn: log_likelihood} is conditional on the entire process history through the scaled forward variable $\bm{\phi}_{t-1}$. 
Consequently, 
the state-dependent process of a hidden Markov model is not Markovian.
Hence, the Hessian of \eqref{eqn: log_likelihood} with respect to $\bm{x}$ is a dense matrix.

This second point extends more generally to derivatives taken with respect to parameters that enter only one or a small number of summands in \eqref{eqn: log_likelihood}. 
Such a situation arises in the second example, where we consider the HMM likelihood of an observed sequence $y_1, \dots, y_T$, conditional on the GMRF $\bm x = (x_1, \dotsc, x_l)^\top$ based on the SPDE approach.
Each weight is associated with a basis function of compact support. 
Indeed, at any spatial location $\bm r$, at most three basis functions $\psi_i(\bm r)$ are non-zero, giving an index set $I(\bm r)$ with at most three entries. The summand in the log-likelihood for time $t$ can then be written as
$$
\log \bigl( \bm{\phi}_{t-1} \bm \Gamma(x_{i_1}, x_{i_2}, x_{i_3}) \bm P (y_t) \bm 1^\top \bigr), \quad i_1, i_2, i_3 \in I(\bm r_t),
$$
only depending on a small subset of the GMRF weights. However, the forward variable $\bm\phi_{t-1}$ carries information from all previous time points. Without further modification, derivatives with respect to field weights appearing at different times can interact through the recursion, producing a dense Hessian unless modifications are applied.



To achieve independence of terms sufficiently far apart in time, we split the sum in \eqref{eqn: log_likelihood} into $B$ non-overlapping blocks of equal length $k$. We refer to $k$ as the \emph{bandwidth} parameter.
To simplify notation, we assume that $T = Bk$ and define
$t_b = (b-1)k$. The log-likelihood can then be written as
$$
\log f(x_1,\dots,x_T)
=
\log f(x_1,\dots,x_k)
+
\sum_{b=2}^B
\sum_{t=t_b+1}^{bk}
\log \underbrace{f(x_t \mid x_1, \dots, x_{t-1})}_{\bm\phi_{t-1} \bm{\Gamma}^{(t)}\bm{P}(x_t) \bm{1}^\top}.
$$

\begin{figure}
    \centering
    \begin{tikzpicture}

    \coordinate (A) at (0,0);
    \coordinate (B) at (2,-1.6);
    \coordinate (C) at (4,-3.2);
    \coordinate (D) at (6,-4.8);

     \draw [dashed] (A) -- ($(A) + (2,0)$);
     \draw ($(A) + (2,0)$) -- ($(A) + (4,0)$);
     
     \draw ($(A) + (0,-0.1)$) -- ($(A) + (0,0.1)$);
     \draw ($(A) + (2,-0.1)$) -- ($(A) + (2,0.1)$);
     \draw ($(A) + (4,-0.1)$) -- ($(A) + (4,0.1)$);

     \draw[decorate, decoration={brace, amplitude=4pt}]
     ($(A) + (2,0.3)$) -- ($(A) + (4,0.3)$) 
     node[midway, above=5pt] {\small $f(x_{k+1}, \dots, x_{2k} \mid x_1, \dots, x_k)$};

     \draw[-{Latex}] ($(A) + (0.4,-0.4)$) -- ($(A) + (1.6,-0.4)$);
     \node at ($(A) + (0,-0.4)$) {$\bm{\delta}^{(1)}$};
     \node at ($(A) + (2,-0.4)$) {$\bm{\phi}_k$};

     \draw [dashed] (B) -- ($(B) + (2,0)$);
     \draw ($(B) + (2,0)$) -- ($(B) + (4,0)$);

     \draw ($(B) + (0,-0.1)$) -- ($(B) + (0,0.1)$);
     \draw ($(B) + (2,-0.1)$) -- ($(B) + (2,0.1)$);
     \draw ($(B) + (4,-0.1)$) -- ($(B) + (4,0.1)$);

     \draw[decorate, decoration={brace, amplitude=4pt}]
     ($(B) + (2,0.3)$) -- ($(B) + (4,0.3)$) 
     node[midway, above=5pt] {\small $f(x_{2k+1}, \dots, x_{3k} \mid x_{k+1}, \dots, x_{2k})$};

     \draw[-{Latex}] ($(B) + (0.4,-0.4)$) -- ($(B) + (1.6,-0.4)$);
     \node at ($(B) + (0,-0.4)$) {$\bm{\rho}$};
     \node at ($(B) + (2,-0.4)$) {$\tilde{\bm{\phi}}^k_{2k}$};

     \draw [dashed] (C) -- ($(C) + (2,0)$);
     \draw ($(C) + (2,0)$) -- ($(C) + (4,0)$);
     
     \draw ($(C) + (0,-0.1)$) -- ($(C) + (0,0.1)$);
     \draw ($(C) + (2,-0.1)$) -- ($(C) + (2,0.1)$);
     \draw ($(C) + (4,-0.1)$) -- ($(C) + (4,0.1)$);

     \draw[decorate, decoration={brace, amplitude=4pt}]
     ($(C) + (2,0.3)$) -- ($(C) + (4,0.3)$) 
     node[midway, above=5pt] {\small $f(x_{3k+1}, \dots, x_{4k} \mid x_{2k+1}, \dots, x_{3k})$};

     \draw[-{Latex}] ($(C) + (0.4,-0.4)$) -- ($(C) + (1.6,-0.4)$);
     \node at ($(C) + (0,-0.4)$) {$\bm{\rho}$};
     \node at ($(C) + (2,-0.4)$) {$\tilde{\bm{\phi}}^k_{3k}$};

     \node at ($(D) + (2,0.5)$) {$\dots$};
    
    \end{tikzpicture}

    \caption{Visual demonstration of the approximation employed for computing the banded forward algorithm. Each block is used twice: 1) for computing its likelihood contribution and 2) for constructing an approximate scaled forward variable for the next block's likelihood contribution.}
    \label{fig:forward}
\end{figure}
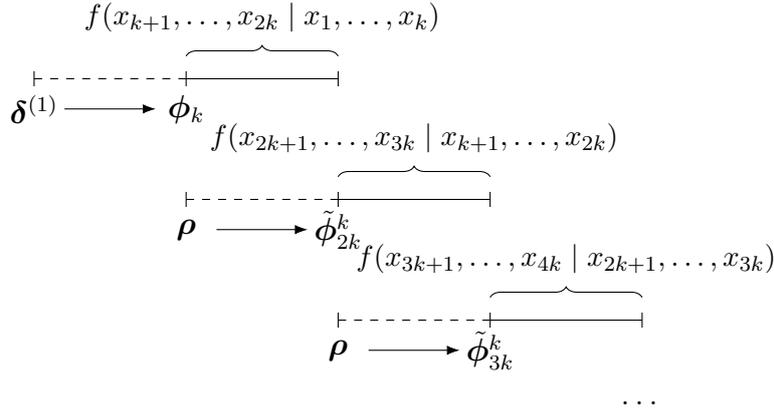

The first block, $\log f(x_1,\dots,x_k)$, can be evaluated exactly, with the forward recursion initialised using the initial distribution $\bm{\delta}^{(1)}$, yielding the scaled forward variable $\bm{\phi}_k$.
This quantity can then also be used to evaluate the log-likelihood contribution of the second block exactly.

The key difference arises when evaluating the (approximate) log-likelihood contributions for blocks $b \geq 3$.
Usually, each corresponding recursion needs to be initialised with the scaled forward variable $\bm{\phi}_{t_b}$ which depends on $x_1, \dots, x_{t_b}$. 
However, to achieve independence of $x_1, \dots, x_{t_{b-1}}$, we construct an approximation $\tilde{\bm{\phi}}^k_{t_b}$. To do so, we define a fixed probability vector $\bm{\rho}$, e.g.\ $\bm{\rho} = \left(\tfrac{1}{N},\dots,\tfrac{1}{N}\right)$, and propagate it through the forward recursion over the \emph{previous} block only. That is, at time $t_b$, we replace the exact scaled forward variable at the start of the block, $\bm{\phi}_{t_b}$, by 
$$
\tilde{\bm{\phi}}^k_{t_b}
=
\frac{
\bm\rho
\prod_{t=t_{b-1}+1}^{t_b}
\bm{\Gamma}^{(t)} \bm{P}(x_t)
}{
\bm\rho
\prod_{t=t_{b-1}+1}^{t_b}
\bm{\Gamma}^{(t)} \bm{P}(x_t)
\bm{1}^\top
},
$$
with entries $\tilde{\phi}_{t_b}(j) = f(s_{t_b} = j \mid x_{t_{b-1}+1}, \dotsc, x_{t_b})$.
In doing so, for each block we \emph{truncate} the conditioning on the process history 
and hence obtain the approximate log-likelihood function
\begin{equation}
\label{eqn:log_likelihood_approx}
\log f(x_1,\dots,x_T)
\approx
\log f(x_1,\dots,x_k)
+
\sum_{b=2}^B
\sum_{t=t_b+1}^{bk}
\log \underbrace{f(x_t \mid x_{t_{b-1}+1}, \dots, x_{t-1})}_{\tilde{\bm\phi}_{t-1} \bm{\Gamma}^{(t)}\bm{P}(x_t) \bm{1}^\top}.
\end{equation}

\begin{algorithm}[t]
\caption{Forward Recursion for log-likelihood and scaled forward variable}
\label{alg:forward_recursion}
\begin{algorithmic}[1]
\Require Observations $x_1, \dots, x_T$, transition matrices $\bm\Gamma^{(t)}$, state-dependent densities $f_j(x_t)$, initial state vector $\bm\rho$
\Ensure Log-likelihood $l$ and last scaled forward variable $\bm{\phi}$

\State Initialise scaled forward variable:
$$
\bm{\phi} \gets \frac{\bm\rho \bm{P}(x_1)}{\bm\rho \bm{P}(x_1) \bm{1}^\top}
$$
\State Initialise log-likelihood:
$$
l \gets \log(\bm\rho \bm{P}(x_1) \bm{1}^\top)
$$

\For{$t = 2$ \textbf{to} $T$}
    \State Update scaled forward variable:
    $$
    \bm{\phi} \gets \frac{\bm{\phi} \bm\Gamma^{(t)} \bm{P}(x_t)}
                            {\bm{\phi} \bm\Gamma^{(t)} \bm{P}(x_t) \bm{1}^\top}
    $$
    \State Update log-likelihood:
    $$
    l \gets l + \log (\bm{\phi} \bm\Gamma^{(t)} \bm{P}(x_t) \bm{1}^\top)
    $$
\EndFor

\State \Return $(l, \bm{\phi})$
\end{algorithmic}
\end{algorithm}

\begin{algorithm}[t]
\caption{Banded Forward Algorithm for Approximate log-likelihood}
\label{alg:banded_forward}
\begin{algorithmic}[1]
\Require Observations $x_1, \dots, x_T$, transition matrices $\bm\Gamma^{(t)}$, state-dependent densities $f_j(x_t)$, initial distribution $\bm\delta^{(1)}$, bandwidth $k$, fixed initialisation vector $\bm{\rho}$
\Ensure Approximate log-likelihood $l$

\State $B \gets T / k$  \Comment{Assume $T$ is a multiple of $k$ for simplicity}
\State Compute exact log-likelihood and final scaled forward variable of first block:
\[
(l, \bm{\phi}) \gets \text{ForwardRecursion}(x_1, \dots, x_k, \bm\Gamma^{(t)}, f_j, \bm\delta^{(1)})
\]
\State Compute exact log-likelihood for second block
$$
l \gets l + \text{ForwardRecursion}(x_{k+1}, \dots, x_{2k}, \bm\Gamma^{(t)}, f_j, \bm{\phi})
$$

\For{$b = 3$ \textbf{to} $B$}
    \State Compute approximate scaled forward variable from previous block:
    $$
    \tilde{\bm{\phi}} \gets \text{ForwardRecursion}(x_{(b-2)k+1}, \dots, x_{(b-1)k}, \bm\Gamma^{(t)}, f_j, \bm{\rho})
    $$
    \State Update log-likelihood for current block:
    $$
    l \gets l + \text{ForwardRecursion}(x_{(b-1)k+1}, \dots, x_{bk}, \bm\Gamma^{(t)}, f_j, \tilde{\bm{\phi}})
    $$
\EndFor

\State \Return $l$
\end{algorithmic}
\end{algorithm}

The precise algorithms for the forward recursion and the banded forward algorithm are given in Algorithms \ref{alg:forward_recursion} and \ref{alg:banded_forward}, respectively. 
Visual intution is given by Figure~\ref{fig:forward}. 
As each block is used \emph{twice} to run the forward recursion, the computational cost of the algorithm is about twice that of the cost of the standard forward algorithm. In high-dimensional random-effect settings, this is however greatly compensated by the fact that it generates a banded Hessian matrix as shown in Figure~\ref{fig:hessians}. 

Motivation for this local approximation is given by the fact that Markov chains forget exponentially fast about their initial condition.
\begin{figure}
    \centering
    \includegraphics[width=1\linewidth]{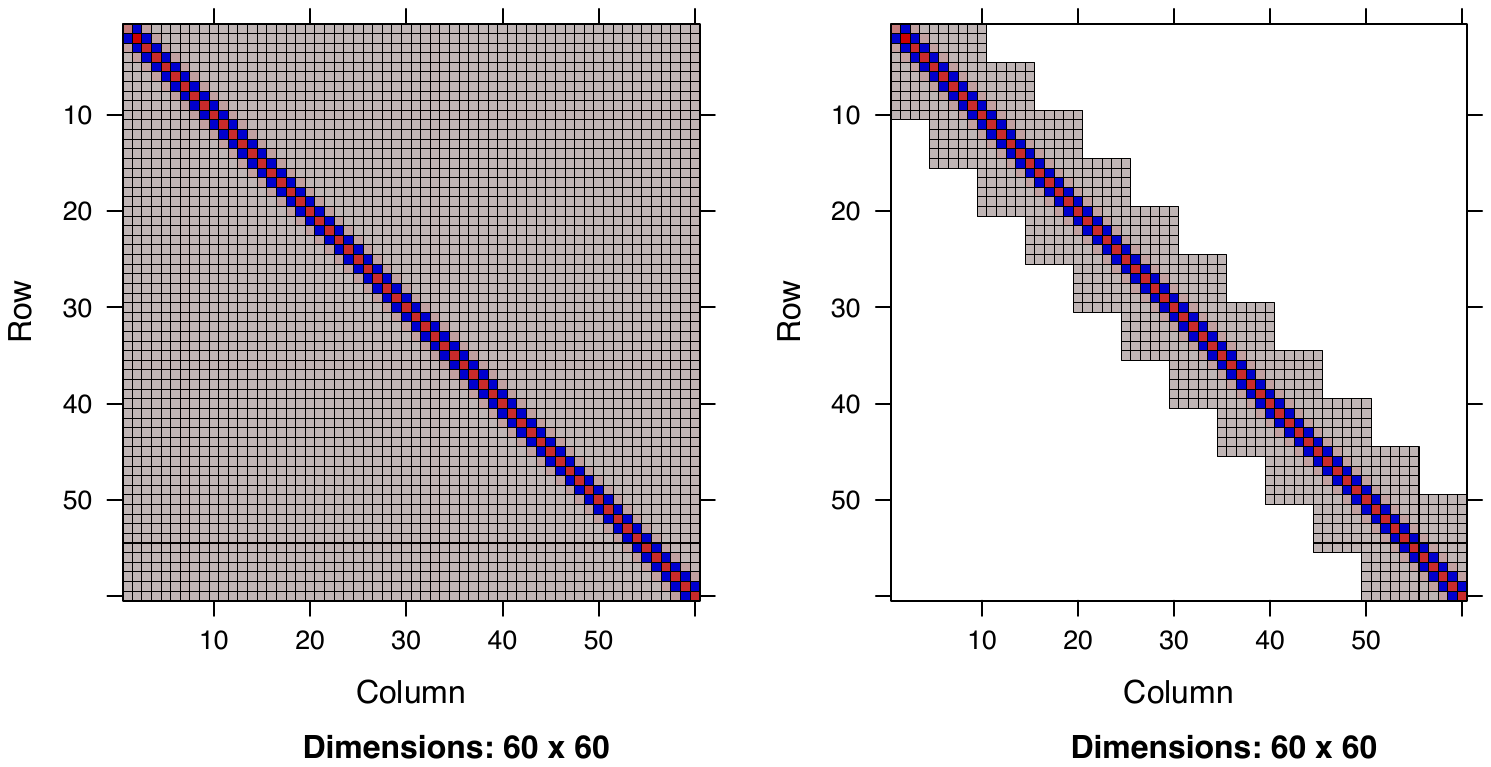}
    \caption{Comparison of the Hessian matrix w.r.t.\ observation sequence $x_1, \dots, x_{60}$ when the log-likelihood is calculated using the regular forward algorithm (left) and using the banded forward algorithm with bandwidth $k = 5$ (right). Gray corresponds to values that are very close to zero in magnitude while white corresponds to exact zeros.}
    \label{fig:hessians}
\end{figure}
Hence, 
for a sufficiently large lag $k$, we expect $\tilde{\bm{\phi}}^k_t$ to be indistinguishable from $\bm{\phi}_t$ regarding numerical precision, and thus will be the conditional likelihood contribution. 
Formal theoretical justification for the algorithm is given by the following theorem.
\begin{theorem}[Geometric decay of the forward-likelihood approximation error]
\label{thm1}
Let $\ell_T(\bm{\theta}, \bm{x})$ denote the exact log-likelihood and 
$\tilde{\ell}_T(\bm{\theta}, \bm{x})$ its bandwidth-$k$ forward approximation. 
Consider either of the following settings:
\begin{enumerate}
    \item[(i)] \textbf{Latent-process case:} 
    $\ell_T(\bm{\theta}, \bm{x})$ is the log-likelihood of a latent process 
    $\bm{x} = (x_1,\dots,x_T)^\top$.
    \item[(ii)] \textbf{Observed-data case:} 
    $\ell_T(\bm{\theta}, \bm{x})$ is the log-likelihood of observations 
    $y_1,\dots,y_T$ conditional on latent variables $\bm{x}$.
\end{enumerate}

Assume that $\bm\theta \in \bm\Theta$, $\bm x \in \mathcal{X}$, $\bm{\Theta} \times \mathcal{X}$ is compact, the state-dependent densities satisfy $0 < m \le f_j(\cdot) \le M < \infty$ for all $j,t$, 
and the underlying Markov chain is uniformly ergodic.

Then there exist constants $C>0$ and $\rho \in (0,1)$, independent of $k$, such that
$$
\sup_{(\bm{\theta}, \bm{x}) \in \bm{\Theta} \times \mathcal{X}} 
\big| \ell_T(\bm{\theta},\bm{x}) - \tilde{\ell}_T(\bm{\theta},\bm{x}) \big|
\le T C \rho^k,
$$
where the inequality holds almost surely in case (ii).
\end{theorem}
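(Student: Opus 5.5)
Write the error as a sum of per-time contributions and control each one through forgetting of the filter. By \eqref{eqn: log_likelihood} and \eqref{eqn:log_likelihood_approx}, the first two blocks agree exactly. So
$$
\ell_T - \tilde\ell_T = \sum_{b=3}^B \sum_{t=t_b+1}^{bk} \Bigl[\log\bigl(\bm\phi_{t-1}\bm\Gamma^{(t)}\bm P(x_t)\bm 1^\top\bigr) - \log\bigl(\tilde{\bm\phi}_{t-1}\bm\Gamma^{(t)}\bm P(x_t)\bm 1^\top\bigr)\Bigr],
$$
where within block $b$ both filters are propagated by the same recursion. The only difference is the initial vector at $t_b$: the exact one uses $\bm\phi_{t_b}$, the approximate one uses $\tilde{\bm\phi}^k_{t_b}$. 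It suffices to bound each summand by $C\rho^k$ uniformly, which gives at most $T$ terms and the bound $TC\rho^k$. In case (ii) the same argument applies with $\bm P(y_t)$ fixed and $\bm\Gamma^{(t)}$ depending on $\bm x$. All bounds are deterministic given the realised $y_t$, which is where "almost surely" enters: we need the density bounds to hold for the realised observations.

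\textbf{Step 1: log-Lipschitz bound.} For probability vectors $\bm a,\bm b$, let $v = \bm\Gamma^{(t)}\bm P(x_t)\bm 1^\top$, whose entries lie in $[m,M]$. Then
$$
|\log(\bm a v)-\log(\bm b v)| \le \frac{|(\bm a-\bm b)v|}{m} \le \frac{M}{m}\,\|\bm a-\bm b\|_1 .
$$
So it suffices to bound $\|\bm\phi_{t-1}-\tilde{\bm\phi}_{t-1}\|_1$ for $t-1\ge t_b$.

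\textbf{Step 2: forgetting of the filter (the main obstacle).} The normalised forward map $\bm\phi\mapsto \bm\phi\bm\Gamma\bm P/(\bm\phi\bm\Gamma\bm P\bm 1^\top)$ is a projective linear map with a positive-diagonal $\bm P$. The plan is to use a Doeblin/Dobrushin minorisation. Uniform ergodicity over the compact set $\bm\Theta\times\mathcal X$, together with continuity of $\bm\Gamma$ in its arguments, gives a single $\varepsilon>0$ such that each $\bm\Gamma^{(t)}$ (or a product of a fixed number $r$ of consecutive ones) satisfies $\gamma_{ij}\ge \varepsilon\nu_j$ for some probability vector $\nu$. Combining this with $m\le f_j\le M$, the filter kernel from time $s$ to $t$ is minorised with constant $\varepsilon m/M$. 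The backward-kernel argument of Del Moral / Cappé–Moulines–Rydén then gives contraction in total variation:
$$
\|\bm a K_{s:t}-\bm b K_{s:t}\|_1 \le 2(1-\varepsilon m/M)^{\lfloor (t-s)/r\rfloor}.
$$
Equivalently, one could use Birkhoff's contraction coefficient for the Hilbert projective metric on positive matrices. I would try the Dobrushin route first. The delicate points are two: uniformity of $\varepsilon$ over $(\bm\theta,\bm x)$, which is where compactness is used, and possible zero entries of $\bm\Gamma$, which are handled by taking $r$-step products.

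\textbf{Step 3: apply the contraction.} Both $\bm\phi_{t_b}$ and $\tilde{\bm\phi}^k_{t_b}$ arise from propagating some initial vectors through the same $k$ steps $t_{b-1}+1,\dots,t_b$:
\begin{itemize}
\item $\bm\phi_{t_b}$ is obtained from $\bm\phi_{t_{b-1}}$;
\item $\tilde{\bm\phi}^k_{t_b}$ is obtained from $\bm\rho$, started one step earlier in the sense that $\bm\rho$ is propagated through $\bm P(x_{t_{b-1}+1})$ without a $\bm\Gamma$, which is again a bounded distortion.
\end{itemize}
Step 2 therefore gives
$$
\|\bm\phi_{t_b}-\tilde{\bm\phi}^k_{t_b}\|_1\le C'\rho_0^{k}, \qquad \rho_0=(1-\varepsilon m/M)^{1/r}.
$$
Since the map is a contraction, this distance does not increase further inside block $b$. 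So for every $t$ in the block, $\|\bm\phi_{t-1}-\tilde{\bm\phi}_{t-1}\|_1\le C'\rho_0^k$. Combining with Step 1 yields each summand bounded by $(M/m)C'\rho_0^k$. Setting $C=(M/m)C'$ and $\rho=\rho_0$, both independent of $k$, $T$ and $(\bm\theta,\bm x)$, and summing over at most $T$ terms proves the theorem.
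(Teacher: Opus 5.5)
Your architecture is the paper's. Step~1 is its Lemma~1 (you work with $\bm\phi_{t-1}$ rather than $\bm p_t=\bm\phi_{t-1}\bm\Gamma^{(t)}$, which is harmless since $\bm\Gamma^{(t)}$ is stochastic). Step~2 is the uniform exponential forgetting that the paper imports from Le Gland and Mevel (their Theorem~2.1, bounded crudely via $\max_j f_j/\min_j f_j\le M/m$); you propose to prove it directly instead.

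One step in Step~3 is false, however. The normalised forward map is \emph{not} a contraction in $\|\cdot\|_1$, so you cannot conclude that $\|\bm\phi_{t-1}-\tilde{\bm\phi}_{t-1}\|_1$ stays below its value at $t_b$ throughout block $b$. The prediction step $\bm\phi\mapsto\bm\phi\bm\Gamma$ is non-expansive in total variation, but the Bayes correction $\bm q\mapsto \bm q\bm P/(\bm q\bm P\bm 1^\top)$ is not. Take $N=2$, $\bm P=\mathrm{diag}(m,M)$, $\bm q=(1-\eta,\eta)$ and $\bm q'=(1-\eta-\delta,\eta+\delta)$ with $(\eta+\delta)M/m$ small. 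The $L_1$ distance $2\delta$ is then mapped to roughly $2\delta M/m$. Non-expansiveness of the full map does hold in Hilbert's projective metric, but your Step~2 bound is not stated in that metric.

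The repair uses only what you already have. Your Step~2 bound holds for arbitrary initial vectors on an arbitrary interval and is nonincreasing in the lag. Take any $t$ in block $b\ge 3$. Then $\bm\phi_{t-1}$ and $\tilde{\bm\phi}_{t-1}$ are images of the two priors $\bm\phi_{t_{b-1}}\bm\Gamma^{(t_{b-1}+1)}$ and $\bm\rho$ for $S_{t_{b-1}+1}$. Both pass through the same corrections and the same $t-2-t_{b-1}\ge k-1$ transitions. So apply Step~2 once over that whole interval, instead of stopping at $t_b$ and propagating. This gives $\|\bm\phi_{t-1}-\tilde{\bm\phi}_{t-1}\|_1\le C'\rho_0^{k-1}$ uniformly over the block. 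That is exactly how the paper uses the forgetting bound (the rate ``depends only on the lag'').

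A smaller point: for $r>1$ the intermediate emission factors enter the $r$-step kernel. The minorisation constant therefore involves a power of $m/M$ that grows with $r$ (the paper's crude bound has $(\epsilon m/M)^l$), not just $\varepsilon m/M$. This only changes $\rho_0$.

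With these fixes, your argument is in two respects more faithful than the paper's.
\begin{itemize}
\item You track the actual block structure of Algorithm~2: the first two blocks are exact, the approximate windows have length between $k$ and $2k-1$, and $\bm\rho$ enters without a $\bm\Gamma$. The paper's proof instead describes $\tilde{\bm\phi}^k_t$ as a sliding window of length $k$.
\item For case (ii), you note that under the uniform bounds $m\le f_j\le M$ the deterministic estimate holds for every realisation of $\bm y$, so ``almost surely'' is automatic. The paper's appeal to Theorem~2.2 of Le Gland and Mevel is not needed under these assumptions.
\end{itemize}
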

For the proof, see Appendix~\ref{a1:proof}.
While Theorem~\ref{thm1} yields a theoretical guarantee that the approximation error can be made arbitrarily small by choosing a sufficiently large $k$, in practice, the constants $C$ and $\rho$ can vary substantially with the characteristics of the data at hand. For good practical performance --- balancing approximation accuracy and numerical efficiency --- a suitable bandwidth should therefore always be tuned to the specific application.
We study the practical implications of varying $k$, and overall estimation performance in the simulation experiments in Appendix~\ref{a3:sim}. Overall these indicate adequate recovery of the true spatial field and all other model parameters. In these experiments, the impact of the bandwidth parameter was relatively modest. However, covering all possible real-data settings is impossible, hence it is difficult to give general guidance.

In practice, we expect values around $k = 15$ to be useful initial guesses. For highly persistent processes, i.e., when diagonal transition probability matrix entries are large, a larger bandwidth is likely required.
A useful strategy for choosing $k$ is to compute the joint negative log-likelihood $g$ at the initial parameter values and gradually increase the bandwidth $k$ until this value stabilises. This provides a sensible starting point for the first model fit. After this initial fit, a sanity check is recommended: analysts should increase the bandwidth by, e.g., $5$, and verify that the estimated parameters remain equal (to a reasonable number of decimal places). If they do, this indicates that the approximation is sufficiently accurate.






\subsection{Practial implementation}

We use the automatic Laplace approximation implemented in the \texttt{RTMB} package \citep{kristensen2024rtmb}, choosing \texttt{RTMB} over \texttt{TMB} simply because defining custom likelihood functions in \texttt{R} is usually more convenient than writing the required \texttt{C++} code necessary for \texttt{TMB}. 

The approximate forward algorithm from Section \ref{subsec: forward algo} is implemented in the \texttt{LaMa} \texttt{R} package \citep{koslikLaMa2025}, a modular framework for building and fitting latent Markovian models \citep{mews2025build}.
For the two core functions that run the homogeneous and inhomogeneous forward algorithm, \texttt{forward()} and \texttt{forward\_g()}, respectively, the user simply needs to pass a \texttt{bw} argument, corresponding to the bandwidth parameter $k$.

For the SPDE approach, we rely on the \texttt{fmesher} \texttt{R} package \citep{lindgren2025fmesher} to construct the required triangulated mesh and finite element matrices. Efficient computation of the GP likelihood is possible through \texttt{RTMB}'s \texttt{dgmrf()} function which evaluates a multivariate Gaussian density parameterised in terms of a sparse precision matrix, and thus works seamlessly with the sparse matrices provided by \texttt{fmesher}.

\section{Case studies}

\label{sec:case_studies}

The following two case studies demonstrate the practical usage of the proposed method for the model formulations presented in Section \ref{subsec: examples models}.
All code and data for full reproducibility of the analyses is available at \url{https://github.com/janolefi/HMMs_GFs}.

\subsection{Detecting stellar flares in photometric data}

We revisit the analysis of \citet{esquivel2025detecting}, focussing on stellar flare detection in the brightness measurements of M dwarf TIC 031381302 observed by TESS at a 2-minute cadence.
Our goal is to demonstrate the computational advantages of our frequentist estimation approach over the dynamic Hamiltonian Monte Carlo algorithm used in their analysis.
Stellar flares are sudden bursts of energy emitted by stars, primarily caused by magnetic reconnection events. Such events are crucial for understanding stellar magnetic activity, rotation, and the radiation environment of orbiting exoplanets. Hence, detecting and characterising flares in photometric time series data is essential for studying stellar behavior.

Flares are characterised by a rapid increase in brightness followed by a gradual decay. Detection is however complicated by the fact that this is usually superimposed on quasi-periodic oscillations in the star's brightness. Traditional methods for flare detection, such as sigma-clipping, often struggle to distinguish low-energy flares from these oscillations, leading to biased energy estimates. 
To address these challenges, \citet{esquivel2025detecting} introduced a $3$-state HMM, modelling the star’s light curve as transitioning between ``quiet'', ``firing'', and ``decaying'' states, superimposed with an additive quasi-periodic trend.
We retain the core structure of their model, where the observed light curve $\{Y_t\}$ is decomposed into a flaring channel $\{X_t\}$ and a quasi-periodic trend $\{u_t\}$ as
$$
Y_t = X_t + u_t, \quad t = 1, \dots, T.
$$
We model $u_t$ as a Gaussian process with scalar mean $\mu$ and, slightly deviating from \citet{esquivel2025detecting}, with a Matérn GP via the SPDE approach. In order to mirror the rotation kernel employed by \citet{esquivel2025detecting}, 
we use the extension to \emph{oscillating} covariance functions, introduced in Section 3.3 of \citet{lindgren2011explicit}. 
This specification results in a precision matrix of the form
$$
\bm{Q}_{\tau, \kappa, \omega} = \tau^2 \bigl(\kappa^4 \bm{C} + 2 \cos(\pi \omega) \kappa^2 \bm{G} + \bm{G}\bm{C}^{-1} \bm{G} \bigr)
$$
with matrices $\bm{C}$ and $\bm{G}$ as in Section~\ref{subsec:GPs} and oscillation parameter $\omega \in (0,1)$.
The corresponding oscillating covariance function is given in Appendix~\ref{a4:oscGP}.
For this GP with 1D domain, mesh construction is simple. Using the \texttt{fmesher} package, we construct degree one B-spline basis functions (tent functions) with knots at the observed time points and compute the finite element matrices.

Again following \citet{esquivel2025detecting}, for the states ``quiet'' (Q), ``firing'' (F), and ``decaying'' (D), we assume the (partially autogressive) state-dependent model
\begin{align*}
X_t &\mid \{S_t = Q\} \sim \mathcal{N}(0, \sigma^2),\\
X_t &\mid \{S_t = F\}, X_{t-1} \sim \mathcal{N}(X_{t-1}, \sigma^2) + \text{Exp}(\lambda),\\
X_t &\mid \{S_t = D\}, X_{t-1} \sim \mathcal{N}(r X_{t-1}, \sigma^2)
\end{align*}
The state process model is a homogeneous Markov chain, where transitions between quiet (Q) and decaying (D) and transitions between firing (F) and quiet (Q) are forbidden. 
Transitions from decaying (D) to firing (F) are permitted, with such a behaviour being called a \emph{compound flare}. Hence, the model has transition matrix
$$
\bm{\Gamma} = \begin{pmatrix}
    \gamma_{QQ} & \gamma_{QF} & 0\\
    0 & \gamma_{FF} & \gamma_{FD} \\
    \gamma_{DQ} & \gamma_{DF} & \gamma_{DD}
\end{pmatrix},
$$
with all non-zero entries being estimated freely. Note that, while $\bm\Gamma$ has zero entries, Theorem~\ref{thm1} still applies because all entries of $\bm\Gamma^2$ are strictly positive, i.e.\ the Markov chain is ergodic with index of primitivity 2.

We implemented a custom likelihood function corresponding to the model above via the 
\texttt{LaMa} \texttt{R} package. 
We split the time series in two because of a long sequence of missing values in the middle.
Finally, the unobserved true process is integrated out using the automatic Laplace approximation of the \texttt{RTMB} package. 
The model is fitted by numerically optimising the Laplace-approximated marginal (negative log-) likelihood function using \texttt{R}'s \texttt{nlminb()}.
Fitting the model to the full remaining data set of two time series and a total of $T = 19,259$ observations with a bandwidth parameter of  $k = 20$ took about 6 minutes on a single core of an Apple M2 chip with 16GB of memory. 
This represents a dramatic improvement in computational efficiency compared to the 1--4 hours per chunk reported by \citet{esquivel2025detecting}. Their approach required splitting the time series into chunks of 2,000 observations due to the computational complexity of their Gaussian process (GP) implementation. In contrast, our method leverages the linear scalability of the SPDE-based approach, enabling efficient inference on the entire time series without the need for chunking.

\begin{figure}
    \centering
    \includegraphics[width=1\linewidth]{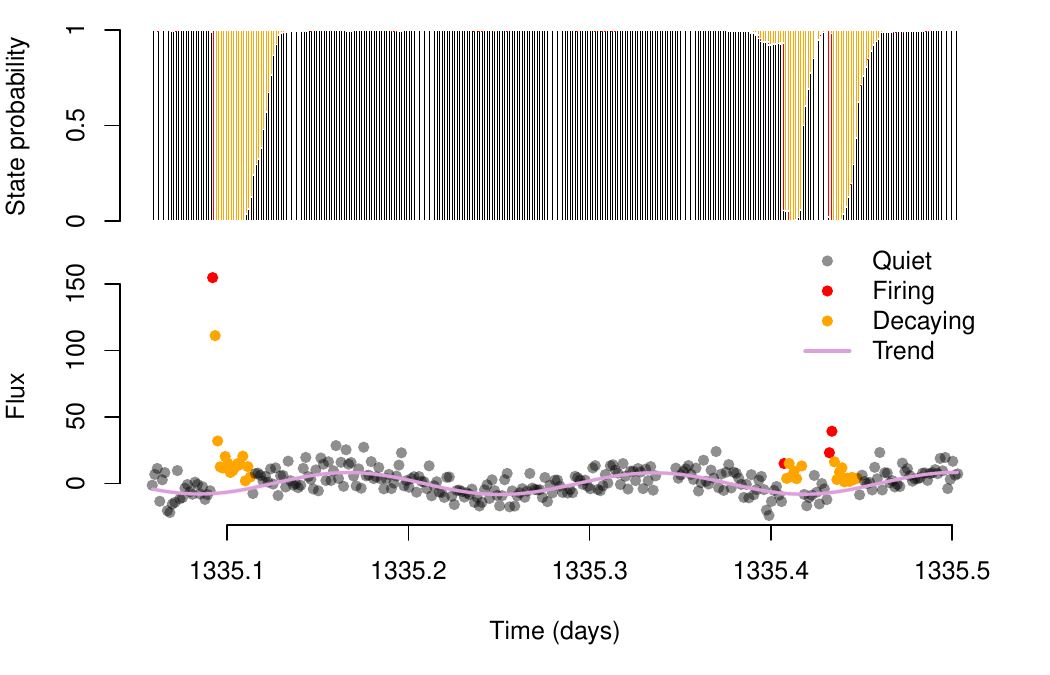}
    \caption{Top panel: Stacked locally decoded state probabilities ($\Pr(S_t = i \mid \bm y)$) for states quiet (black), firing (red) and decaying (orange) for a subsection of the brightness time series. Bottom panel: Corresponding brightness values, colour-coded according to the Viterbi-decoded (most-likely) state \citep{viterbi2003error}. The purple line is the estimated quasi-periodic trend, i.e.\ posterior mode of the Gaussian process. 
    We also show an approximate 95\% credible interval for the fitted smooth in light purple, which is barely visible due to the high estimation precision.}
    \label{fig:flares}
\end{figure}

For the time series of M dwarf TIC 031381302, the model flagged 18 flare events, with an average duration of 17.4 minutes including the decaying phase. We find that the model is able to detect flare events while simultaneously controlling for the quasi-periodic trend in oscillation (cf.\ Figure~\ref{fig:flares}).
Specifically, note that the second flare in Figure~\ref{fig:flares} would be very difficult to detect using a simpler model without a quasi-periodic trend, as the brightness value of the observations decoded as ``flaring'' is lower than that of previous observations that are decoded as ``quiet''.
The full time series with detected flares is shown in Appendix~\ref{a5:figs}.

We do not aim to provide a formal performance comparison with the approach of \citet{esquivel2025detecting}, nor to claim that our specification achieves the same level of physical validity. Their work carefully investigated model adequacy and detection performance using injection experiments, which is beyond the scope of this case study. Instead, our objective is to illustrate that a closely related modelling framework can be fitted using a Laplace-approximated marginal likelihood combined with an SPDE-based GP representation, resulting in substantially reduced computational load. The example highlights the potential of this frequentist inference strategy as a scalable and practically attractive alternative. 


\subsection{Lion movement in Central Kalahari}



In our second case study, we consider GPS movement data of lions (\textit{Panthera leo}) in the Central Kalahari, Botswana, publicly available through the Movebank Data Repository \citep{MacFarlane2014}. 
Data were collected from December 2008 to March 2012 and originally comprised 241,858 observations. 
Tracks from sensors with long, systematically missing periods not missing at random (e.g., no data between 8~a.m. and 3~p.m.) were excluded. 
The remaining near-hourly data were regularised to exact hourly positions by aligning observations within 15~minutes of the full hour to the hour and discarding all others. This resulted in a final dataset of $T = 53,453$ hourly positions. 
The GPS measurements were converted into \emph{step lengths} (km) and \emph{turning angles} (radians) as it is commonly done when analysing horizontal movement data \citep{langrock2012flexible}.

We specified the state-dependent distributions for step lengths and turning angles as
$$
\text{step}_t \mid \{S_t = j\} \sim \text{Gamma}(\mu_j, \sigma_j), \quad
\text{angle}_t \mid \{S_t = j\} \sim \text{wrapped Cauchy}(\nu_j, \rho_j),
$$
for states $j = 1, \dots, N$. After some initial exploration we decided that a $2$-state model --- distinguishing between predominantly inactive and more active behaviour ---
was the most appropriate. 
Although a 3-state model further subdivided the movement state, the overlap between the state-dependent distributions was substantial, and the estimated transition probability matrix exhibited little persistence. This indicates that, at an hourly temporal resolution, different types of active movement could not be reliably disentangled.
We modelled the state-switching dynamics of the latent $2$-state Markov chain as 
$$
\text{logit}(\gamma_{12}^{(t)}) = \beta_0^{(12)} + h_{12}(\text{hour}_t), \quad \text{logit}(\gamma_{21}^{(t)}) = \beta_0^{(21)} + h_{21}(\text{hour}_t) + u(\bm{r}_t),
$$
where $\bm{r}_t$ is the 2D location in the study area at time $t$ and $u$ is a Matérn field. The periodic function in both equations is
$$h_{ij}(\text{hour}) = \sum_{k=1}^3 \beta_{1k}^{(ij)} \sin \Bigl(\frac{2 \pi k \, \text{hour}}{24} \Bigl) + \sum_{k=1}^3 \beta_{2k}^{(ij)} \cos \Bigl(\frac{2 \pi k \, \text{hour}}{24} \Bigr).$$
Only the linear predictor for the transition from state~2 (active) to state~1 (resting) includes the spatial field, as state~1 is characterised by virtually no movement. Consequently, a switch to state~2 occurs very near to the location where the animal was resting and depends primarily on the time spent there and the time of day.

\begin{figure}
    \centering
    \includegraphics[width=1\linewidth]{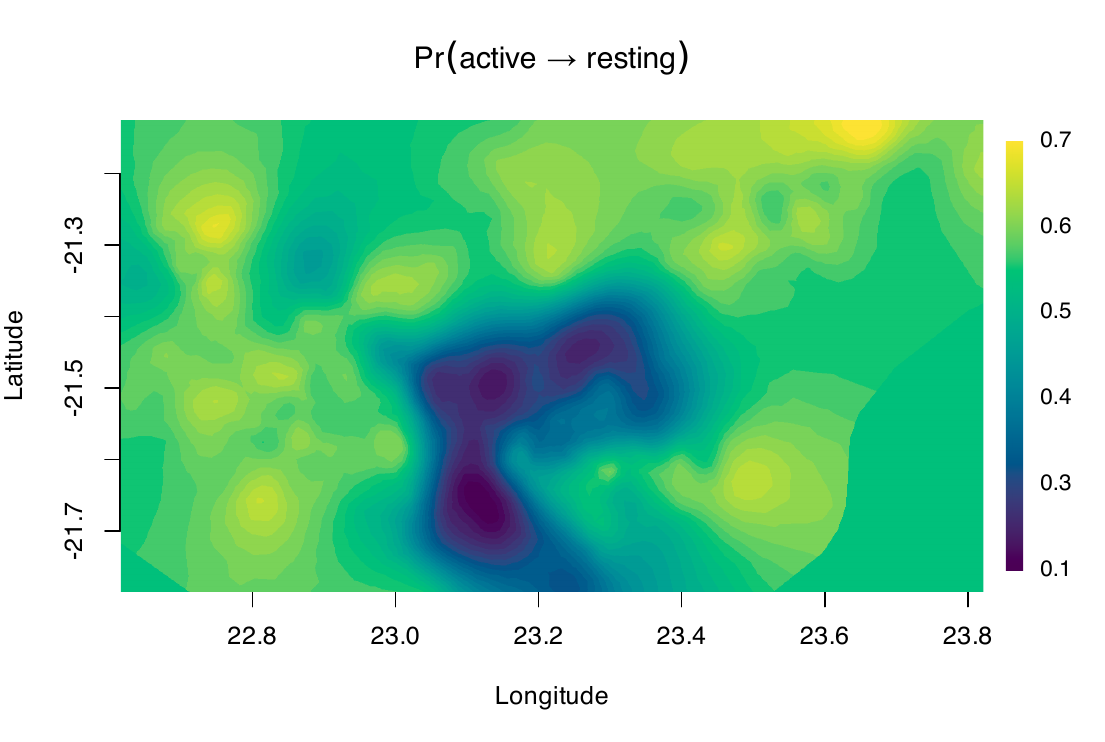}
    \caption{Transition probability from active to resting, based on the posterior mode of the spatial field fitted to the lion data. Dark blue areas indicate a low probability of transitioning into the resting state, while yellow areas indicate a high probability. Pixels in the lower-right corner fall outside the triangulated mesh, so the field is predicted as zero there.}
    \label{fig:spatial_field}
\end{figure}

Again using \texttt{fmesher}, we approximated the spatial field by constructing a triangulated mesh with $2516$ nodes, leading to GMRF approximating with $l = 2516$ weights and precision matrix as given in Equation \eqref{eqn:SPDE_precision}. The mesh is shown in Appendix~\ref{a5:figs}.
Fitting the model with a bandwidth parameter of $k = 15$ took about $6$ minutes on an Apple M2 chip with 16GB of memory. Additionally, we also fitted a homogeneous model and models including only the random field or only the periodic functions $f_{ij}$.
For the fully parameteric models, fitting times where negligible in comparison to the models including a spatial field. Fitting the purely spatial model took about $4$ minutes.


\begin{figure}
    \centering
    \includegraphics[width=0.85\linewidth]{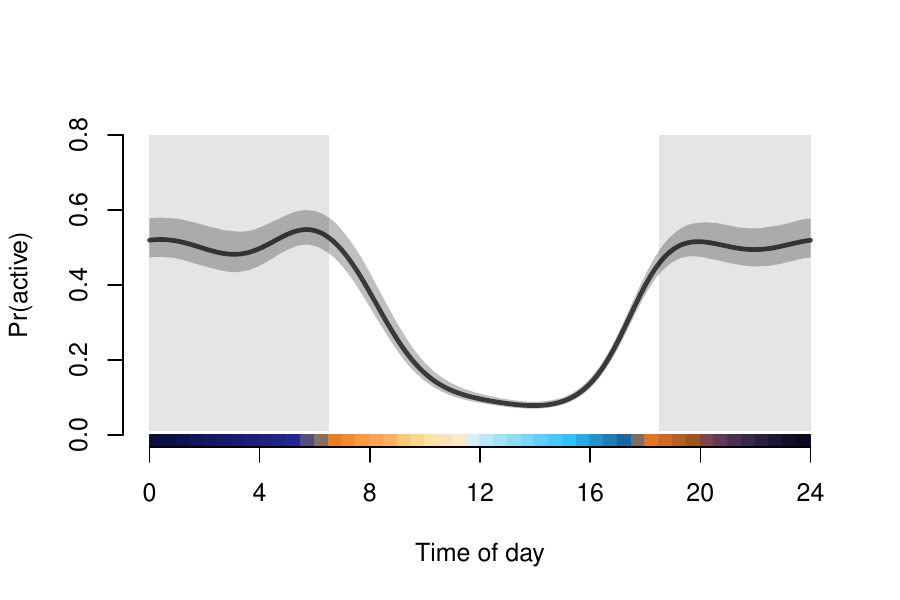}
    \caption{Probability of being active as a function of the time of day, obtained based on the periodically stationary state distribution. The gray band corresponds to pointwise 95\% confidence intervals based on the approximate multivariate normal distribution of the MLE.}
    \label{fig:pstationary}
\end{figure}


We find that the animals are very unlikely to transition into the resting state within a distinct region in the centre of the study area (cf.\ Figure~\ref{fig:spatial_field}). In other words, once active in this area, they tend to remain active rather than settle. A comparison with satellite imagery did not reveal any obvious landscape features explaining this pattern. This spatial structure may therefore reflect more subtle ecological drivers, such as prey distribution, social interactions, or unobserved habitat characteristics, and warrants further ecological investigation.

Isolating the effect of time of day via the periodically stationary distribution \citep{koslik2025inference} shows a clear diel pattern (cf. Figure~\ref{fig:pstationary}). Activity is highest during the night and early morning hours, and lowest during the daytime, which is consistent with the predominantly nocturnal behaviour of lions in arid environments.

\section{Discussion}

We consider combining hidden Markov models (HMMs) and latent Gaussian fields in a unified modelling framework, where, depending on the application, the GF component may enter at different levels of the model hierarchy.
Recent advances in automatic differentiation and automatic Laplace approximation have, in principle, made such models computationally feasible. In particular, these tools integrate naturally with the SPDE approach, which provides Gaussian Markov random field approximations to Gaussian fields and consequently yields sparse precision matrices. This sparsity is essential for scalable inference when the latent Gaussian component is high-dimensional.
A central computational obstacle specific to HMMs is that marginalisation over the discrete latent states --- required to evaluate the HMM log-likelihood --- induces global temporal dependence. Consequently, the Hessian matrix with respect to the latent variables is dense, undermining the efficiency gains of the automatic Laplace approximation as implemented in (\texttt{R})\texttt{TMB}.

To address this issue, we developed a modified forward algorithm for evaluating the HMM log-likelihood. The key feature of this construction is that it generates a sparse Hessian, thereby restoring compatibility with SPDE-based GMRF representations and automatic Laplace approximation tools. 
Taken together, these three ingredients allow rapid and scalable inference in models that integrate discrete state switching with richly parameterised Gaussian fields. 

We illustrated the practical performance of the approach in two case studies. The first concerned stellar flare detection, where a state-switching flaring process was masked by quasi-periodic oscillations in stellar brightness modelled via a Gaussian process. The second analysed movement data from African lions in the Central Kalahari, where behavioural switching was modulated by a latent spatial field entering the HMM likelihood. In both cases, the proposed framework enabled flexible modelling while retaining remarkable computational efficiency.


Several extensions appear promising. For instance, the SPDE approach has already been generalised to latent fields that are non-stationary \citep{lindgren2011explicit}, and non-Gaussian or spatio-temporal processes are also possible \citep{lindgren2022spde}. 
Furthermore, the SPDE framework can also be applied to non-Euclidean manifolds \citep{lindgren2011explicit}, such as curved surfaces or even networks. Exploring these richer settings as building blocks inside HMMs is an interesting avenue for future research. 

\section*{Acknowledgments}

The author thanks Roland Langrock for helpful comments on an earlier version of this manuscript as well as Kasper Kristensen for answering various questions related to \texttt{RTMB}.

\section*{Conflict of Interest}
The author declares that there are no conflicts of interest.

\printbibliography

@article{lindgren2011explicit,
  title={An explicit link between Gaussian fields and Gaussian Markov random fields: the stochastic partial differential equation approach},
  author={Lindgren, Finn and Rue, H{\aa}vard and Lindstr{\"o}m, Johan},
  journal={Journal of the Royal Statistical Society Series B: Statistical Methodology},
  volume={73},
  number={4},
  pages={423--498},
  year={2011},
  publisher={Oxford University Press}
}

@article{lystig2002exact,
  title={Exact computation of the observed information matrix for hidden {M}arkov models},
  author={Lystig, Theodore C and Hughes, James P},
  journal={Journal of {C}omputational and {G}raphical {S}tatistics},
  volume={11},
  number={3},
  pages={678--689},
  year={2002},
  publisher={Taylor \& Francis}
}

@article{le2000exponential,
  title={Exponential forgetting and geometric ergodicity in hidden Markov models},
  author={Le Gland, Franc{\c{c}}ois and Mevel, Laurent},
  journal={Mathematics of Control, Signals and Systems},
  volume={13},
  number={1},
  pages={63--93},
  year={2000},
  publisher={Springer}
}

@book{zucchini2016hidden,
  title={Hidden {M}arkov {M}odels for {T}ime {S}eries: {A}n {I}ntroduction using {R}},
  author={Zucchini, Walter and MacDonald, Iain L and Langrock, Roland},
  year={2016},
  publisher={Chapman and Hall/ CRC press},
  edition={2nd}
}

@article{erkanli1994laplace,
  title={Laplace approximations for posterior expectations when the mode occurs at the boundary of the parameter space},
  author={Erkanli, Alaattin},
  journal={Journal of the {A}merican {S}tatistical {A}ssociation},
  volume={89},
  number={425},
  pages={250--258},
  year={1994},
  publisher={Taylor \& Francis}
}

@book{van2000asymptotic,
  title={Asymptotic {S}tatistics},
  author={Van der Vaart, Aad W},
  volume={3},
  year={2000},
  publisher={Cambridge {U}niversity {P}ress}
}

@article{langrock2015nonparametric,
  title={Nonparametric inference in hidden {M}arkov models using {P}-splines},
  author={Langrock, Roland and Kneib, Thomas and Sohn, Alexander and DeRuiter, Stacy L},
  journal={{B}iometrics},
  volume={71},
  number={2},
  pages={520--528},
  year={2015},
  publisher={Oxford University Press}
}

@article{langrock2017markov,
  title={Markov-switching generalized additive models},
  author={Langrock, Roland and Kneib, Thomas and Glennie, Richard and Michelot, Th{\'e}o},
  journal={Statistics and {C}omputing},
  volume={27},
  pages={259--270},
  year={2017},
  publisher={Springer}
}

@article{chen2023bayesian,
  title={Bayesian spline-based hidden {M}arkov models with applications to actimetry data and sleep analysis},
  author={Chen, Sida and Finkenst{\"a}dt, B{\"a}rbel},
  journal={Journal of the {A}merican {S}tatistical {A}ssociation},
  pages={1--11},
  year={2023},
  publisher={Taylor \& Francis}
}

@article{feldmann2023flexible,
  title={Flexible modelling of diel and other periodic variation in hidden {M}arkov models},
  author={Feldmann, Carlina C and Mews, Sina and Coculla, Angelica and Stanewsky, Ralf and Langrock, Roland},
  journal={Journal of {S}tatistical {T}heory and {P}ractice},
  volume={17},
  number={3},
  pages={45},
  year={2023},
  publisher={Springer}
}

@article{michelot2025hmmtmb,
  title={hmmTMB: {H}idden {M}arkov models with flexible covariate effects in {R}},
  author={Michelot, Th{\'e}o},
  journal={Journal of {S}tatistical {S}oftware},
  volume={114},
  pages={1--45},
  year={2025}
}

@article{koslik2024efficient,
  title={Efficient smoothness selection for nonparametric {M}arkov-switching models via quasi restricted maximum likelihood},
  author={Koslik, Jan-Ole},
  journal={arXiv preprint arXiv:2411.11498},
  year={2024}
}

@article{koslik2025tensor,
  title={Tensor-product interactions in {M}arkov-switching models},
  author={Koslik, Jan-Ole},
  journal={arXiv preprint arXiv:2507.01555},
  year={2025}
}

@article{michels2025nonparametric,
   author = {Rouven Michels and Roland Langrock},
   title ={Nonparametric estimation of bivariate hidden {M}arkov models using tensor-product {B}-splines},
   journal = {Statistical {M}odelling},
   volume = {0},
   number = {0},
   pages = {1471082X251335431},
   year = {2025}
}

@article{koslik2025flexible,
  title={Flexible unimodal density estimation in hidden {M}arkov models},
  author={Koslik, Jan-Ole and Dupont, Fanny and Auger-M{\'e}th{\'e}, Marie and Marcoux, Marianne and Hussey, Nigel and Heckman, Nancy},
  journal={arXiv preprint arXiv:2511.17071},
  year={2025}
}

@article{ammann2026non,
  title={Non-homogeneous {M}arkov-switching generalised additive models for location, scale, and shape},
  author={Ammann, Katharina and Adam, Timo and Koslik, Jan-Ole},
  journal={arXiv preprint arXiv:2601.03760},
  year={2026}
}

@article{ohagan1978curve,
  title={Curve fitting and optimal design for prediction},
  author={O'Hagan, Anthony},
  journal={Journal of the {R}oyal {S}tatistical {S}ociety: {S}eries {B} ({M}ethodological)},
  volume={40},
  number={1},
  pages={1--24},
  year={1978},
  publisher={Wiley Online Library}
}

@article{foreman2017fast,
  title={Fast and scalable {G}aussian process modeling with applications to astronomical time series},
  author={Foreman-Mackey, Daniel and Agol, Eric and Ambikasaran, Sivaram and Angus, Ruth},
  journal={The {A}stronomical {J}ournal},
  volume={154},
  number={6},
  pages={220},
  year={2017},
  publisher={IOP Publishing}
}

@article{esquivel2025detecting,
  title={Detecting stellar flares in photometric data using hidden {M}arkov models},
  author={Esquivel, J Arturo and Shen, Yunyi and Leos-Barajas, Vianey and Eadie, Gwendolyn and Speagle, Joshua S and Craiu, Radu V and Medina, Amber and Davenport, James RA},
  journal={The {A}strophysical {J}ournal},
  volume={979},
  number={2},
  pages={141},
  year={2025},
  publisher={IOP Publishing}
}

@article{kristensen2016tmb,
  title={{TMB}: automatic differentiation and {L}aplace approximation},
  author={Kristensen, Kasper and Nielsen, Anders and Berg, Casper W and Skaug, Hans and Bell, Bradley M},
  journal={Journal of {S}tatistical {S}oftware},
  volume={70},
  pages={1--21},
  year={2016}
}

@Manual{kristensen2024rtmb,
    title = {{RTMB}: `R' Bindings for {TMB}'},
    author = {Kasper Kristensen},
    year = {2026},
    note = {R package version 1.8},
    url = {https://CRAN.R-project.org/package=RTMB},
}

@Manual{koslikLaMa2025,
  title = {{L}a{M}a: {F}ast {N}umerical {M}aximum {L}ikelihood {E}stimation for {L}atent {M}arkov {M}odels},
  author = {Jan-Ole Fischer},
  year = {2026},
  note = {{R} package version 2.1.0},
  url = {https://CRAN.R-project.org/package=LaMa},
}

@article{mews2025build,
  title={How to build your latent {M}arkov model: {T}he role of time and space},
  author={Mews, Sina and Koslik, Jan-Ole and Langrock, Roland},
  journal={Statistical {M}odelling},
note = {in press},
  publisher={SAGE Publications Sage India: New Delhi, India},
  year={2025}
}

@article{mcclintock2020uncovering,
  title={Uncovering ecological state dynamics with hidden {M}arkov models},
  author={McClintock, Brett T and Langrock, Roland and Gimenez, Olivier and Cam, Emmanuelle and Borchers, David L and Glennie, Richard and Patterson, Toby A},
  journal={{E}cology {L}etters},
  volume={23},
  number={12},
  pages={1878--1903},
  year={2020},
  publisher={Wiley Online Library}
}

@article{mews2022multistate,
  title={Multistate capture--recapture models for irregularly sampled data},
  author={Mews, Sina and Langrock, Roland and King, Ruth and Quick, Nicola},
  journal={The Annals of Applied Statistics},
  volume={16},
  number={2},
  pages={982--998},
  year={2022},
  publisher={Institute of Mathematical Statistics}
}

@article{liu2012stock,
  title={Stock market volatility and equity returns: {E}vidence from a two-state {M}arkov-switching model with regressors},
  author={Liu, Xinyi and Margaritis, Dimitris and Wang, Peiming},
  journal={Journal of {E}mpirical {F}inance},
  volume={19},
  number={4},
  pages={483--496},
  year={2012},
  publisher={Elsevier}
}

@article{zhang2019high,
  title={High-order hidden {M}arkov model for trend prediction in financial time series},
  author={Zhang, Mengqi and Jiang, Xin and Fang, Zehua and Zeng, Yue and Xu, Ke},
  journal={Physica {A}: {S}tatistical {M}echanics and its {A}pplications},
  volume={517},
  pages={1--12},
  year={2019},
  publisher={Elsevier}
}

@article{amoros2019continuous,
  title={A continuous-time hidden {M}arkov model for cancer surveillance using serum biomarkers with application to hepatocellular carcinoma},
  author={Amoros, Ruben and King, Ruth and Toyoda, Hidenori and Kumada, Takashi and Johnson, Philip J and Bird, Thomas G},
  journal={Metron},
  volume={77},
  number={2},
  pages={67--86},
  year={2019},
  publisher={Springer}
}

@article{soper2020hidden,
  title={A hidden {M}arkov model for population-level cervical cancer screening data},
  author={Soper, Braden C and Nyg{\aa}rd, Mari and Abdulla, Ghaleb and Meng, Rui and Nyg{\aa}rd, Jan F},
  journal={{S}tatistics in {M}edicine},
  volume={39},
  number={25},
  pages={3569--3590},
  year={2020},
  publisher={Wiley Online Library}
}

@Manual{lindgren2025fmesher,
    title = {fmesher: {T}riangle {M}eshes and {R}elated {G}eometry {T}ools},
    author = {Finn {L}indgren},
    year = {2026},
    note = {{R} package version 0.7.0},
    url = {https://CRAN.R-project.org/package=fmesher}
  }

@phdthesis{MacFarlane2014,
  author    = {MacFarlane, K.},
  title     = {The ecology and management of {K}alahari lions in a conflict area in central {B}otswana},
  school    = {{A}ustralian {N}ational {U}niversity},
  address   = {Canberra (Australia)},
  year      = {2014},
  type      = {Dissertation}
}

@article{koslik2025inference,
  title={Inference on the state process of periodically inhomogeneous hidden {M}arkov models for animal behavior},
  author={Koslik, Jan-Ole and Feldmann, Carlina C and Mews, Sina and Michels, Rouven and Langrock, Roland},
  journal={The {A}nnals of {A}pplied {S}tatistics},
  volume={19},
  number={4},
  pages={2724--2737},
  year={2025},
  publisher={Institute of Mathematical Statistics}
}

@article{langrock2012flexible,
  title={Flexible and practical modeling of animal telemetry data: hidden {M}arkov models and extensions},
  author={Langrock, Roland and King, Ruth and Matthiopoulos, Jason and Thomas, Len and Fortin, Daniel and Morales, Juan M},
  journal={{E}cology},
  volume={93},
  number={11},
  pages={2336--2342},
  year={2012},
  publisher={Wiley Online Library}
}

@article{bolin2009wavelet,
  title={Wavelet {M}arkov models as efficient alternatives to tapering and convolution fields},
  author={Bolin, David and Lindgren, Finn},
  journal={Preprints in {M}athematical {S}ciences},
  volume={2009},
  number={13},
  year={2009},
  publisher={Lund University}
}

@book{cressie2015statistics,
  title={Statistics for {S}patial {D}ata},
  author={Cressie, Noel},
  year={2015},
  publisher={John Wiley \& Sons}
}

@article{michels2026integrating,
  title={Integrating unsupervised and supervised learning for the prediction of defensive schemes in {A}merican football},
  author={Michels, Rouven and Bajons, Robert and Fischer, Jan-Ole},
  journal={arXiv preprint arXiv:2602.10784},
  year={2026}
}

@article{tierney1986accurate,
  title={Accurate approximations for posterior moments and marginal densities},
  author={Tierney, Luke and Kadane, Joseph B},
  journal={Journal of the {A}merican {S}tatistical {A}ssociation},
  volume={81},
  number={393},
  pages={82--86},
  year={1986},
  publisher={Taylor \& Francis}
}

@article{ogden2021error,
  title={On the error in {L}aplace approximations of high-dimensional integrals},
  author={Ogden, Helen},
  journal={Stat},
  volume={10},
  number={1},
  pages={e380},
  year={2021},
  publisher={Wiley Online Library}
}

@article{bachl2019inlabru,
  title={inlabru: an {R} package for {B}ayesian spatial modelling from ecological survey data},
  author={Bachl, Fabian E and Lindgren, Finn and Borchers, David L and Illian, Janine B},
  journal={Methods in {E}cology and {E}volution},
  volume={10},
  number={6},
  pages={760--766},
  year={2019},
  publisher={Wiley Online Library}
}

@article{lindgren2015bayesian,
  title={Bayesian spatial modelling with {R-INLA}},
  author={Lindgren, Finn and Rue, H{\aa}vard},
  journal={Journal of {S}tatistical {S}oftware},
  volume={63},
  pages={1--25},
  year={2015}
}

@article{otting2023copula,
  title={A copula-based multivariate hidden {M}arkov model for modelling momentum in football},
  author={{\"O}tting, Marius and Langrock, Roland and Maruotti, Antonello},
  journal={{AStA} {A}dvances in {S}tatistical {A}nalysis},
  volume={107},
  number={1},
  pages={9--27},
  year={2023},
  publisher={Springer}
}

@article{lindgren2022spde,
  title={The {SPDE} approach for {G}aussian and non-{G}aussian fields: 10 years and still running},
  author={Lindgren, Finn and Bolin, David and Rue, H{\aa}vard},
  journal={Spatial {S}tatistics},
  volume={50},
  pages={100599},
  year={2022},
  publisher={Elsevier}
}

@article{viterbi2003error,
  title={Error bounds for convolutional codes and an asymptotically optimum decoding algorithm},
  author={Viterbi, Andrew},
  journal={{IEEE} {T}ransactions on {I}nformation {T}heory},
  volume={13},
  number={2},
  pages={260--269},
  year={2003},
  publisher={IEEE}
}

@book{bates1988nonlinear,
  title={Nonlinear {R}egression {A}nalysis and its {A}pplications},
  author={Bates, Douglas M and Watts, Donald G},
  volume={2},
  year={1988},
  publisher={Wiley New York}
}

@article{mcclintock2012general,
  title={A general discrete-time modeling framework for animal movement using multistate random walks},
  author={McClintock, Brett T and King, Ruth and Thomas, Len and Matthiopoulos, Jason and McConnell, Bernie J and Morales, Juan M},
  journal={{E}cological {M}onographs},
  volume={82},
  number={3},
  pages={335--349},
  year={2012},
  publisher={Wiley Online Library}
}

\appendix
\renewcommand{\thesubsection}{\Alph{section}.\arabic{subsection}}

\renewcommand{\thefigure}{A.\arabic{figure}}
\setcounter{figure}{0}

\section{Appendix}

\subsection{Proofs}
\label{a1:proof}

\subsubsection*{Notation}

Let $\ell_T(\bm{\theta}, \bm{x})$ denote the log-likelihood function based on $T$ observations. This can either be the log-likelihood of latent observations $x_1, \dots, x_T$ ($\log f(x_1, \dots, x_T)$) or the log-likelihood of observations $y_1, \dots, y_T$ given a latent field ($\log f(y_1, \dots, y_T \mid \bm{x})$). 
Let $\tilde\ell_T(\bm{\theta}, \bm{x})$ be the respective approximation given in Equation \eqref{eqn:log_likelihood_approx}.

Let $\bm{\phi}_t$ denote the exact scaled forward variable with entries $\phi_t(j) = \Pr(S_t = j \mid x_1, \dots, x_t)$,
and let $\tilde{\bm{\phi}}_t^k$ denote the approximation based on the last $k$ observations, with entries
$\tilde{\phi}_t^k(j) = \Pr(S_t = j \mid x_{t-k+1}, \dots, x_t)$.
Now define the prediction filter $\bm{p}_t = \bm \phi_{t-1} \bm{\Gamma}^{(t)}$ and its approximation $\tilde{\bm{p}}_t^k = \tilde{\bm\phi}_{t-1} \bm{\Gamma}^{(t)}$.

\subsubsection*{Assumptions}
We make the following assumptions.
\begin{enumerate}
    \item \textbf{Compact parameter and latent space:} 
    $\bm{\Theta} \times \mathcal{X}$ is a compact subset of $\mathbb{R}^p \times \mathbb{R}^d$.
    
    \item \textbf{Bounded state-dependent densities:} 
    For all states $j$ and all times $t$, the emission densities satisfy
    $$
    0 < m \le f_j(\cdot) \le M < \infty,
    $$
    uniformly over all relevant $(\bm{\theta},\bm{x}) \in \bm{\Theta} \times \mathcal{X}$.
    
    \item \textbf{Uniform ergodicity of the Markov chain:} 
    There exists a block length $l \ge 1$ and a constant $\varepsilon > 0$ such that for all $t$,
    $$
    \big(\bm{\Gamma}^{(t)} \bm{\Gamma}^{(t+1)} \cdots \bm{\Gamma}^{(t+l-1)}\big)_{ij} \ge \varepsilon
    \quad \text{for all states } i,j.
    $$
    This ensures that the Markov chain is uniformly ergodic. We call $l$ the index of primitivity of the chain.
\end{enumerate}

\begin{lemma}[Filter bound propagates to log-likelihood]
\label{lem:filter_to_likelihood}
Under Assumptions (1)-(3), we have
$$
\sup_{(\bm{\theta},\bm{x}) \in \bm{\Theta} \times \mathcal{X}} 
\big| \ell_T(\bm{\theta},\bm{x}) - \tilde{\ell}_T(\bm{\theta},\bm{x}) \big|
\le \sum_{t=1}^T \frac{M}{m} \sup_{(\bm{\theta}, \bm{x})} \|\bm{p}_t - \tilde{\bm{p}}_t^k\|
$$
where $\| \bm u\| = \sum_i \vert u_i\vert$ denotes the $L_1$-norm of a vector $\bm u$. 
If a random sequence $y_1, \dots, y_T$ is involved, the inequality holds pathwise (i.e., for each realization of $\bm y$).
\end{lemma}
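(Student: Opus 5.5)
We prove the bound term by term: write both log-likelihoods as sums of one-step log predictive contributions and control each difference of logarithms using the lower and upper density bounds of Assumption (2).

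\textbf{Step 1: decomposition.} By \eqref{eqn: log_likelihood} and \eqref{eqn:log_likelihood_approx},
$$
\ell_T - \tilde\ell_T = \sum_{t=1}^T \Bigl[\log\bigl(\bm p_t \bm P_t \bm 1^\top\bigr) - \log\bigl(\tilde{\bm p}_t^k \bm P_t \bm 1^\top\bigr)\Bigr].
$$
Here $\bm P_t$ is $\bm P(x_t)$ in case (i). In case (ii) it is $\bm P(y_t)$, with the dependence on $\bm x$ entering through $\bm\Gamma^{(t)}$ or the densities. For the first $2k$ time points (the first two blocks) and for $t=1$, the approximate and exact predictive vectors coincide, since these terms are computed exactly. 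Those summands are therefore zero, and including them in the sum only enlarges the bound. With the obvious convention $\bm p_1=\tilde{\bm p}_1^k=\bm\delta^{(1)}$, no special case is needed.

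\textbf{Step 2: bounding each summand.} Put $a=\bm p_t\bm P_t\bm 1^\top=\sum_j p_t(j) f_j$ and $b=\tilde{\bm p}_t^k\bm P_t\bm 1^\top$. Both $\bm p_t$ and $\tilde{\bm p}_t^k$ are probability vectors, because a probability vector times a stochastic matrix is again a probability vector. By Assumption (2), each of $a,b$ is a convex combination of numbers in $[m,M]$, so $a,b\ge m$. The mean value theorem for $\log$ on $[\min(a,b),\max(a,b)]\subset[m,\infty)$ then gives $|\log a-\log b|\le |a-b|/m$. Next,
$$
|a-b| = \Bigl|\sum_j \bigl(p_t(j)-\tilde p_t^k(j)\bigr) f_j\Bigr| \le M\,\|\bm p_t-\tilde{\bm p}_t^k\|.
$$
Together these yield $|\log a-\log b|\le \tfrac{M}{m}\|\bm p_t-\tilde{\bm p}_t^k\|$. (A sharper bound $\tfrac{M-m}{2m}$ is available because the difference vector sums to zero, but it is not needed.)

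\textbf{Step 3: assembly.} The triangle inequality over $t$ gives $|\ell_T-\tilde\ell_T|\le \tfrac{M}{m}\sum_t \|\bm p_t-\tilde{\bm p}_t^k\|$ at each $(\bm\theta,\bm x)$. Bounding each summand on the right by its supremum and then taking the supremum on the left gives the claim. The suprema are finite: by Assumption (1) and continuity the quantities are bounded, and in any case each $L_1$ difference of two probability vectors is at most $2$.

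In case (ii), every step uses only the deterministic bounds $m\le f_j(y_t)\le M$ at the realised $y_t$. The argument therefore holds for each fixed realisation $\bm y$, that is, pathwise.

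The only delicate point is ensuring that $\bm p_t$ and $\tilde{\bm p}_t^k$ are genuine probability vectors, so that the lower bound $m$ transfers to the denominators. This holds because $\bm\rho$, $\bm\delta^{(1)}$ and all scaled forward variables are normalised and the $\bm\Gamma^{(t)}$ are stochastic. Assumption (3) is not needed for this lemma; it is used only afterwards, to show that $\|\bm p_t-\tilde{\bm p}_t^k\|$ decays geometrically in $k$.
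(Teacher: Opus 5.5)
Your proof is correct and follows the paper's route. You write each contribution as $\sum_j p_t(j) f_j^{(t)}$, bound the difference by $M\|\bm p_t - \tilde{\bm p}_t^k\|$, apply the mean value theorem for $\log$ with intermediate point at least $m$, and sum via the triangle inequality. Your explicit argument that this intermediate point is at least $m$ (both predictive densities are convex combinations of values in $[m,M]$, since $\bm p_t$ and $\tilde{\bm p}_t^k$ are probability vectors) fills in a step the paper only asserts, and you are right that Assumption (3) is not used here.
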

\begin{proof}
For $j = 1, \dots, N$, let $f_j^{(t)}$ denote the \emph{evaluated} state-dependent densities, i.e.\ $f_j(x_t \mid \bm x; \bm \theta)$ (case 1) or $f_j(y_t \mid \bm x; \bm \theta)$ (case 2). Then the conditional densities multiplied in the likelihood are
$$
f_t = \sum_{i=1}^N \sum_{j=1}^N \phi_{t-1}(i) \gamma_{ij}^{(t)} f_j^{(t)} = \sum_{j=1}^N p_t(j) f_j^{(t)}.
$$
Denote with $\tilde{f}_t^k$ the same quantity with $\bm{\phi}_{t-1}$ replaced by $\tilde{\bm{\phi}}_{t-1}^k$.
Since the $f_j^{(t)}$ are bounded, the error in the conditional densities is also bounded:
$$
\sup_{(\bm{\theta}, \bm{x})} |f_t - \tilde{f}_t^k| \le M \sup_{(\bm{\theta}, \bm{x})} \|\bm{p}_{t} - \tilde{\bm{p}}_{t}^k\|.
$$

Applying the mean value theorem for logarithms gives
$$
|\log f_t - \log \tilde{f}_t^k| = \frac{|f_t - \tilde{f}_t^k|}{\xi},
$$
for some $\xi \in (\min(f_t, \tilde{f}_t^k), \max(f_t, \tilde{f}_t^k)) \ge m$. Hence,
$$
\sup_{(\bm{\theta}, \bm{x})} |\log f_t - \log \tilde{f}_t^k| \le \frac{M}{m} \sup_{(\bm{\theta}, \bm{x})} \|\bm{p}_{t} - \tilde{\bm{p}}_{t}^k\|.
$$
Summing over $t = 1, \dots, T$, and using the triangle inequality, we obtain
$$
\sup_{(\bm{\theta}, \bm{x})} |\ell_T(\bm{\theta}, \bm{x}) - \tilde{\ell}_T(\bm{\theta}, \bm{x})|
\le \sum_{t=1}^T \frac{M}{m} \sup_{(\bm{\theta}, \bm{x})} \|\bm{p}_{t} - \tilde{\bm{p}}_{t}^k\|.
$$
\end{proof}

\begin{proof}[Proof of Theorem 1: Case 1]
We first cover the deterministic result, i.e.\ when 
$\ell_T(\bm{x}, \bm\theta) = \log f_\theta(x_1, \dots, x_T)$, purely a function of a parameter $\bm\theta$ and latent variables $\bm{x}$. We show that the approximation error decays geometrically in $k$, that is
$$
\sup_{(\bm{\theta},\bm{x}) \in \bm{\Theta} \times \mathcal{X}} \|\bm{p}_t - \tilde{\bm{p}}_t^k\| \le C \rho^k,
$$
$C > 0$ and $\rho \in (0,1)$, relying on the work of \citet{le2000exponential}. 
Under our assumptions, the assumptions of \citet{le2000exponential} are satisfied.
The truncated filter $\tilde{\bm p}_t^k$ corresponds to a filter started at time $t-k$ from an arbitrary initial distribution and propagated forward to time $t$. In the notation of \citet{le2000exponential}, this is equivalent to comparing filters on an interval $[m,n]$ with $m = t-k$ and $n = t$, so that the exponential forgetting rate depends only on the lag $n-m = k$. In our notation, Theorem 2.1 states
$$
\|\bm{p}_t - \tilde{\bm{p}}_t^k\| \leq 2 \prod_{\kappa=1}^{\lfloor (k+1)/l \rfloor} (1 - \epsilon^l[\delta(x_{(t-k) + (\kappa-1)r+1}) \cdots \delta(x_{(t-k)+\kappa r - 1})]),
$$
where $l$ is the index of primitivity of the chain (see Assumptions), $\delta(x) = \frac{\max_j f_j(x)}{\min_j f_j(x)}$ $\epsilon = \text{min}_+ \gamma_{ij}$ is the minimum of the positive entries of the t.p.m. Since in our case $\bm \Gamma^{(t)}$ might be time-varying, we use $\epsilon = \underset{i,j,t}{\text{min}_+} \gamma^{(t)}_{ij}$. The product over $\kappa$ corresponds to successive blocks of length $l$.
Boundedness of the state-dependent densities implies $\delta(x) \leq \frac{M}{m}$ for any $x$. Hence we arrive at the more crude bound
$$
\|\bm{p}_t - \tilde{\bm{p}}_t^k\| \leq 2 \prod_{\kappa=1}^{\lfloor (k+1)/l \rfloor} (1 - (\epsilon \tfrac{m}{M})^l) = 2 (1 - \epsilon [\tfrac{m}{M}]^l)^{\lfloor (k+1)/l \rfloor} \le 2 \bigl((1 - (\epsilon \tfrac{m}{M})^l)^{1/l} \bigr)^k = C \rho^k,
$$
and application of Lemma~\ref{lem:filter_to_likelihood} gives
$$
\sup_{(\bm{\theta},\bm{x}) \in \bm{\Theta} \times \mathcal{X}} 
\big| \ell_T(\bm{\theta},\bm{x}) - \tilde{\ell}_T(\bm{\theta},\bm{x}) \big|
\le \sum_{t=1}^T \frac{M}{m} \sup_{(\bm{\theta}, \bm{x})} \|\bm{p}_t - \tilde{\bm{p}}_t^k\| 
\le T \frac{MC}{m} \rho^k.
$$
\end{proof}

\begin{proof}[Proof of Theorem 1: Case 2]
We now cover the case where $\ell_T(\bm{x}, \bm\theta) = \log f_\theta(y_1, \dots, y_T \mid \bm x)$, is a function of a parameter $\bm\theta$, latent variables $\bm{x}$ and a stochastic process generated from the model. Hence, a deterministic result is not possible and we show
$$
\sup_{(\bm{\theta}, \bm{x}) \in \bm{\Theta} \times \mathcal{X}} \left| \ell_T(\bm{\theta}, \bm{x}) - \tilde{\ell}_T(\bm{\theta}, \bm{x}) \right| \le T C \rho^k \quad \text{a.s.},
$$
In this case, we can indeed directly apply Theorem 2.2 of \citet{le2000exponential} and have 
$$
\sup_{(\bm{\theta}, \bm{x})} \|\bm{p}_t - \tilde{\bm{p}}_t^k\| \le C \rho^k \quad \text{a.s.}, 
$$
where $\rho$ again depends on the index of primitivity of the chain and the minimum positive entry of the t.p.m. 
Pathwise application of Lemma~\ref{lem:filter_to_likelihood} then again gives
$$
\sup_{(\bm{\theta},\bm{x}) \in \bm{\Theta} \times \mathcal{X}} 
\big| \ell_T(\bm{\theta},\bm{x}) - \tilde{\ell}_T(\bm{\theta},\bm{x}) \big|
\le \sum_{t=1}^T \frac{M}{m} \sup_{(\bm{\theta}, \bm{x})} \|\bm{p}_t - \tilde{\bm{p}}_t^k\| 
\le T \frac{MC}{m} \rho^k, \quad \text{a.s.}
$$
\end{proof}

\subsubsection*{Remarks on the assumptions}

\begin{remark}[Compactness of $\mathcal{X}$]
Within the Laplace approximation, the objective
$$
\ell_T(\bm{\theta}, \bm{x}) - \frac{1}{2} \bm{x}^\top \bm{Q} \bm{x}
$$
is maximised at $\hat{\bm{x}}_\theta$. The quadratic term dominates as $\|\bm{x}\| \to \infty$, allowing restriction to a compact ball around $\hat{\bm{x}}_\theta$.
\end{remark}

\begin{remark}[Boundedness of state-dependent densities]
As in the main manuscript, notation is complicated by the fact that we either evaluate the forward algorithm on latent variables $x_1, \dots, x_t$ (case 1) or genuine observations $y_1, \dots, y_T$ (case 2). However, under Assumption (1) and the additional assumption of non-zero variance parameters, the state-dependent densities are bounded in any of the following cases:
\begin{enumerate}
    \item $f_j(y_t; \bm{\theta})$, independent of $\bm{x}$ and assuming non-zero variance parameters.
    \item $f_j(y_t; \bm{\theta}, \bm{x})$, non-zero variance and with $\bm{x}$ in compact set $\mathcal{X}$.
    \item $f_j(x_t; \bm{\theta}, \bm{x})$, non-zero variance and with $\bm{x}$ in compact set $\mathcal{X}$.
\end{enumerate}
\end{remark}

\begin{remark}[Uniform ergodicity of the Markov chain]
If all entries of $\bm\Gamma^{(t)}$ are positive and depend on bounded covariates and parameters in $\bm\theta$, or on random effects $\bm x$, the compactness of $\bm{\Theta} \times \mathcal{X}$ implies $\gamma_{ij}^{(t)} \ge \varepsilon > 0$.

Ergodicity is slightly weaker but also holds if entries are fixed at $\gamma_{ij}^{(t)} = 0$, as long as there exist an $l$, such that $(\bm\Gamma^{(t)} \bm\Gamma^{(t+1)} \cdots \bm\Gamma^{(t+l)})_{ij} \ge \varepsilon > 0$ for all $i,j,t$. The integer $l$ is called the index of primitivity of the chain, which is greater than 1 in this case. 
Consequently, the assumption is violated if the Markov chain is periodic or reducible and in such a case the forward-approximation is not applicable.
\end{remark}

\subsection{Quality of the Laplace approximation}
\label{a2:laplace}

The Laplace approximation replaces the conditional distribution of the latent variables $\bm x$ given the data with a Gaussian distribution centered at its mode \citep{tierney1986accurate}.
If the function $g$ in the exponent of the posterior is exactly quadratic, the Laplace approximation is exact. This occurs, for example, when both the observation and process distributions are Gaussian and linear in $\bm{x}$. In our model, this condition only holds for one of the two densities, while the other corresponds to a non-Gaussian, non-linear HMM likelihood.

Even in this case, the Laplace approximation provides a local Gaussian approximation around the posterior mode. Its accuracy depends on the posterior being dominated by a single mode. In practice, it often performs well, particularly when the posterior is concentrated around a single mode and the sample size is sufficient to constrain the distribution in a relatively small region of the high-dimensional space, i.e., when the posterior is sharply peaked near the mode, corresponding to relatively large eigenvalues of the Hessian $\bm{H}_\theta$ \citep{tierney1986accurate, ogden2021error}.

\subsection{A simulation experiment}
\label{a3:sim}

All code for full reproducibility of the simulation experiment is available at \url{https://github.com/janolefi/HMMs_GFs}.

We generate data from a 2-state HMM where the transition probability matrix is driven by a spatial field. We simulate state-dependent gamma-distributed step lengths, reparameterised in terms of mean and standard deviation, 
$$
\text{step}_t \mid \{S_t = j\} \sim \text{Gamma}(\mu_j, \sigma_j), \quad{j = 1,2}
$$
with mean values $\mu_1 = 0.2$, $\mu_2 = 5$ and standard deviations $\sigma_1 = 0.5$, $\sigma_2 = 3$.
At each time point, we also randomly draw a turning angle from a von Mises distribution, independently of the latent state. The mean is chosen as to imply a biased random walk towards the location $(0,0)$ to prevent the simulated track from dispersing too much, with the concentration parameter being set to $0.3$.
We then compute Euclidean locations at each time point, initialising with the position $(0,0)$ and calculating the subsequent position based on the sampled step length and direction.
The transition probability $\gamma_{12}$ is homogeneous while $\gamma_{21}^{(t)}$ depends on a spatial field. 
Specifically, we set
$$\text{logit}(\gamma_{12}) = \beta_0^{(12)} \quad \text{and} \quad \text{logit}(\gamma_{21}^{(t)}) = \beta_0^{(21)} + u(\bm{r}_t),$$
with $\beta_0^{(12)} = \beta_0^{(21)} = -2$. We fix the intial distribution of the chain at $\bm{\delta}^{(1)} = (0.5,0.5)$ and do not estimate it as it is not the focus of this study and usually of minor practical relevance.
The true spatial field $u$ which we use for simulation is parametric with 
$$
u(\bm{r}) = u(x,y) = 2 \sin(2 \pi \tfrac{x}{40}) + 2 \cos(2 \pi \tfrac{y}{40}).
$$

\begin{figure}
    \centering
    \includegraphics[width=1\linewidth]{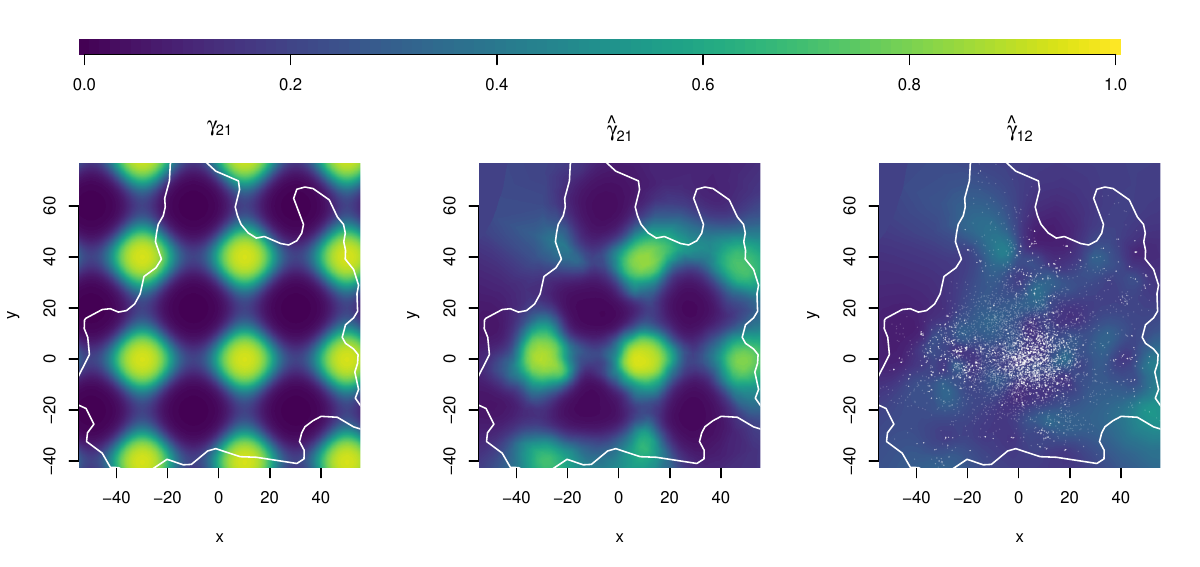}
    \caption{Estimated spatial field in one of the 200 simulation runs with 10,000 observations and bandwidth $k = 15$.
    True transition probability $\gamma_{21}(\bm r)$ as a function of location (left), estimated transition probability $\hat\gamma_{21}(\bm r)$ (middle) and estimated transition probability $\hat\gamma_{12}(\bm r)$ (right). For the latter, the true function is completely flat, hence the ideal fit would be a flat surface. The white lines indicate the non-convex hull around the observed locations used for mesh construction in the GMRF approximation. The right panel additionally shows the visited locations as white dots.}
    \label{fig:sim_spatial_field}
\end{figure}

For time-series length $T \in \{5000, 10000\}$, we simulate 200 data sets each. For each simulation run, we then fit the corresponding HMM with bandwidth parameters $k \in \{2, 5, 10, 15\}$ to investigate the practical effect of varying this parameter. For each run, we construct a GMRF approximation to the spatial field based on a triangulated mesh. However, we misspecify the model by additionally including a spatial field for the transition probability $\gamma_{12}$ in order to inspect the model's capability of smoothing this field appropriately to zero. 
Figure~\ref{fig:sim_spatial_field} shows the true and estimated fields for a single simulation run, indicating adequate recovery of the true field. 

We cannot directly compate the estimated intecept $\beta_0^{(21)}$ to its true value as the predicted Gaussian field $\hat u$ is pulled towards a zero mean by the GMRF density, while the true function $u$ might have a non-zero mean inside the area that was visited. Hence, we compute the mean of $\hat \eta_{21}^{(t)} - \eta_{21}^{(t)}$ over time, and similarly for $\eta_{12}^{(t)}$.
We find a minor upward bias in the average difference for both linear predictors, i.e.\ the estimated models tend to be slightly less persistent than the true one on average for both sample sizes and all bandwidth parameters (cf.\ Figure~\ref{fig:sim_eta}). Overall, the bandwidth parameter seems to have little influence on the estimates: for $k \ge 5$ all distributions are virtually identical.

\begin{figure}
    \centering
    \includegraphics[width=1\linewidth]{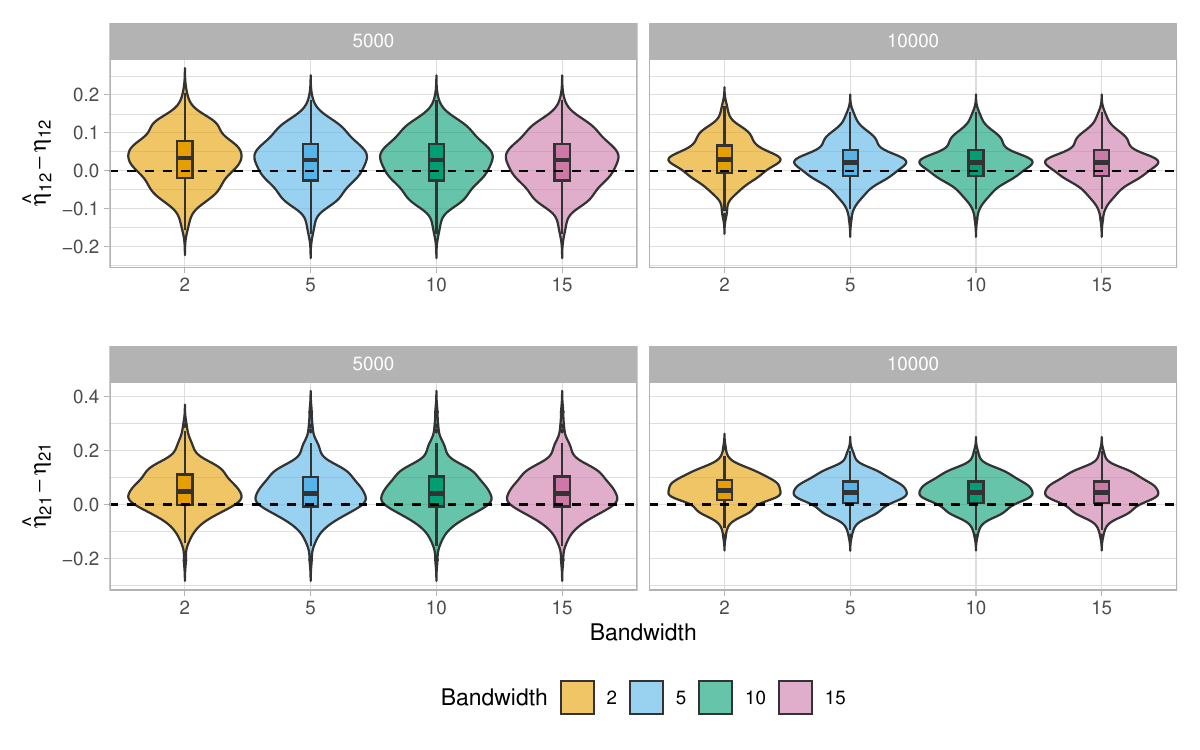}
    \caption{Violin plots and boxplots of the average difference $\tfrac{1}{T} \sum_{t=1}^T (\hat\eta_{ij}^{(t)} - \eta_{ij}^{(t)})$ between estimated and true linear predictors  in sample, i.e.\ evaluated at the observed loactions for sample sizes $T = 5000, 10000$ and bandwidth parameters $k = 2, 5, 10, 15$.}
    \label{fig:sim_eta}
\end{figure}

To investigate interpolation performance, we compute the true and estimated field on a $512 \times 512$ grid in each run. As we do not expect to approximate the true field outside the area of observations, we restrict ourselves to those grid cells that lie inside the non-convex hull spanned by observed locations, which is computed anyway in each run for mesh construction (cf. Figure~\ref{fig:sim_spatial_field}).
Inside this region, the correlation between the estimate $\hat \gamma_{21}(\bm s)$ and its true value $\gamma_{21}(\bm s)$ is computed on the grid. Computing the corresponding corresponding quantity for $\hat \gamma_{12}(\bm s)$ is not possible as the true field is constant and hence has zero variance. 
We find that correlation is high, with an average value slightly below 0.9 (cf.\ Figure~\ref{fig:sim_cor_rmse}). As expected, the distribution becomes more concentrated for the larger sample size of $T = 10000$. Again, virtually no effect of the bandwidth parameter is visible.

\begin{figure}
    \centering
    \includegraphics[width=1\linewidth]{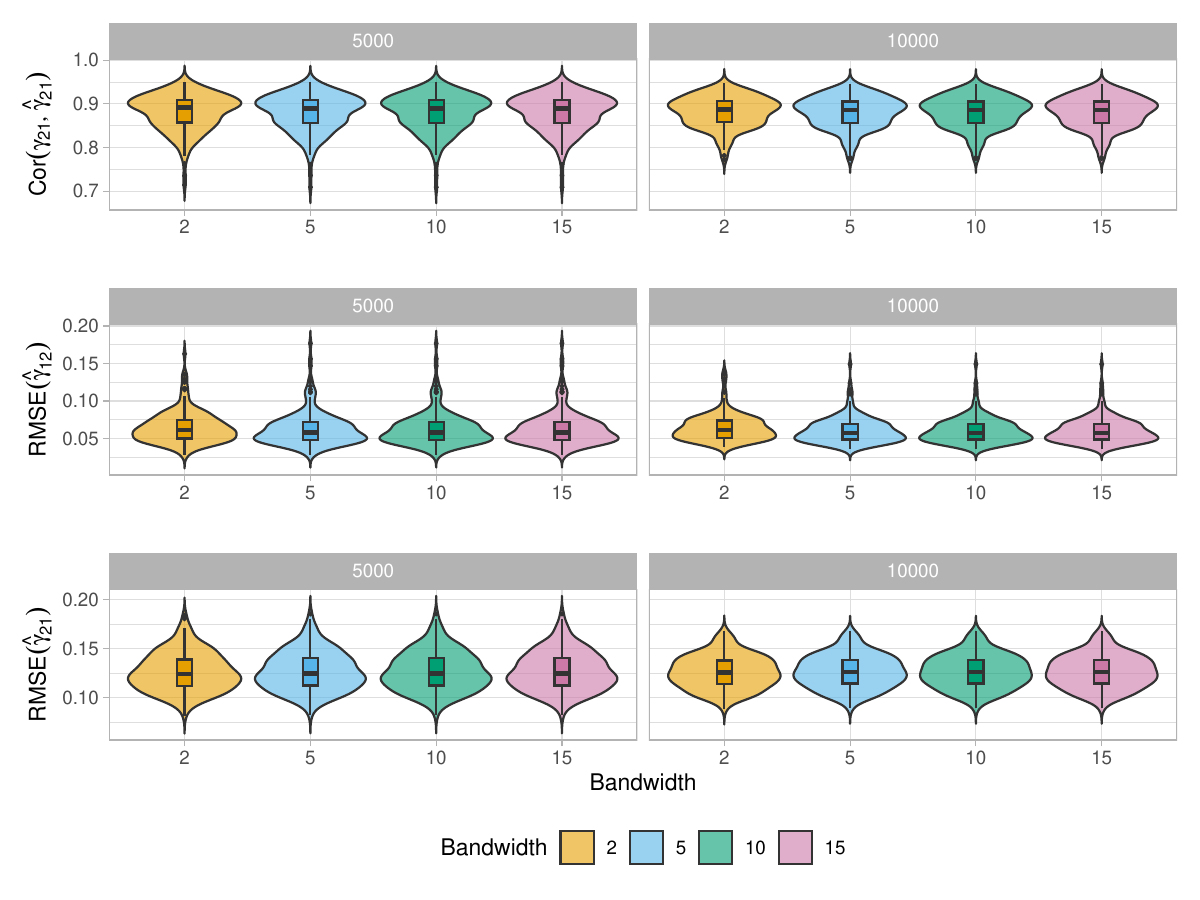}
    \caption{Violin plots and boxplots of correlation between true and estimated transition probability $\gamma_{21}(\bm s)$ on a grid (inside the non-convex hull spanned by the observed locations) (top panel), and root mean squared error (RMSE) between estimated and true transition probabilities (bottom two panels). The correlation between $\hat\gamma_{12}(\bm s)$ and the constant true field is undefined as the latter has zero variance.}
    \label{fig:sim_cor_rmse}
\end{figure}

Additionally, we compute the root mean squared errors (RMSE)
$$
\sqrt{\frac{1}{n} \sum_i \bigl( \hat \gamma_{12}(\bm s_i) - \gamma_{12} \bigr)^2}, \quad \text{and} \quad
\sqrt{\frac{1}{n} \sum_i \bigl( \hat \gamma_{21}(\bm s_i) - \gamma_{21}(\bm s_i) \bigr)^2},$$
where $n$ is the number of grid cells inside the area of interest.
The RMSE between $\hat \gamma_{12}(\bm s)$ and the constant field has an average value around 0.06, with the distribution stabilising for $k \ge 5$ (cf.\ Figure~\ref{fig:sim_cor_rmse}). Performance for the true field is slightly worse, with an average RMSE around 0.12. Again deviation in the distribution of values is only visible between $k = 2$ and $k = 5$.

Lastly, the estimates for the state-dependent means $\mu_1$ and $\mu_2$ and standard deviations $\sigma_1$ and $\sigma_2$ appear mostly unbiased (cf.\ Figure~\ref{fig:sim_mu_sigma}). As is to be expected, variance decreases with increasing sample size. The effect of the bandwidth parameter is barely noticable, again only with visible differences between $k = 2$ and $k = 5$.


\newpage

\subsection{GP with oscillating covariance function}
\label{a4:oscGP}

For the stellar flare case study, we use the Gaussian process detailed in Section 3.3 of \citet{lindgren2011explicit}. This is based on a complex version of the SPDE given in \eqref{eq:spde}, i.e.\ 
$$
\bigl( \kappa^2 \exp(i \pi \omega) - \Delta \bigr) \bigl( u_1(\bm{r}) + i u_2(\bm{r}) \bigr) = \mathcal{W}_1(\bm{r}) + i \mathcal{W}_2(\bm{r})
$$
where $\omega \in (0,1)$ is the oscillation parameter, and $\mathcal{W}_1(\bm{r})$ and $\mathcal{W}_2(\bm{r})$ are independent Gaussian white noise processes.
Oscillation increases with the parameter $\omega$.
The real and imaginary parts, $u_1(\bm{s})$ and $u_2(\bm{s})$, are independent with identical spectra. Hence, the same oscillating covariance function
$$
\mathcal{K}(v, w) = \frac{1}{2 \sin(\pi \omega) \kappa^3} \exp \Bigl( - \kappa \cos \bigl(\frac{\pi \omega}{2} \bigr) \vert v-w \vert \Bigr) \sin \Bigl( \frac{\pi \omega}{2} + \kappa \sin \bigl(\frac{\pi \omega}{2} \bigr) \vert v-w \vert  \Bigr)
$$
applies to either component. Its simple form combines exponential decay with periodic oscillation. Figure~\ref{fig:osc_cov} gives some intuition for the behaviour of $\mathcal{K}$.
The associated precision matrix has closed form and is given in the main text.

\begin{figure}[h]
    \centering
    \includegraphics[width=0.75\linewidth]{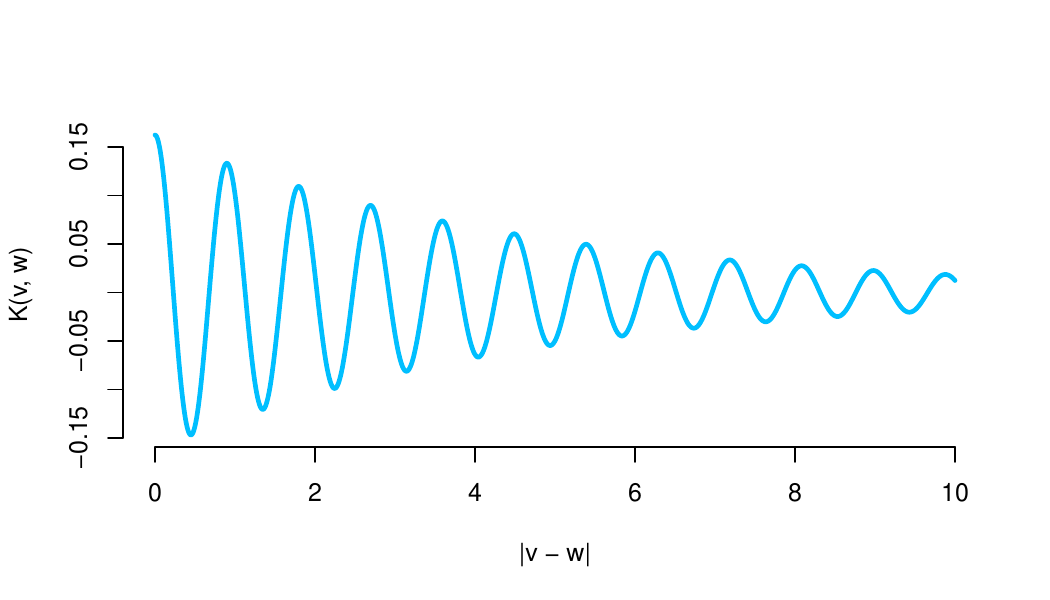}
    \caption{Oscillating covariance function $K(v,w)$ for parameters $\omega = 0.98$ and $\kappa = 7$.}
    \label{fig:osc_cov}
\end{figure}

\newpage
\subsection{Additional figures}
\label{a5:figs}

\begin{figure}[h]
    \centering
    \includegraphics[width=1\linewidth]{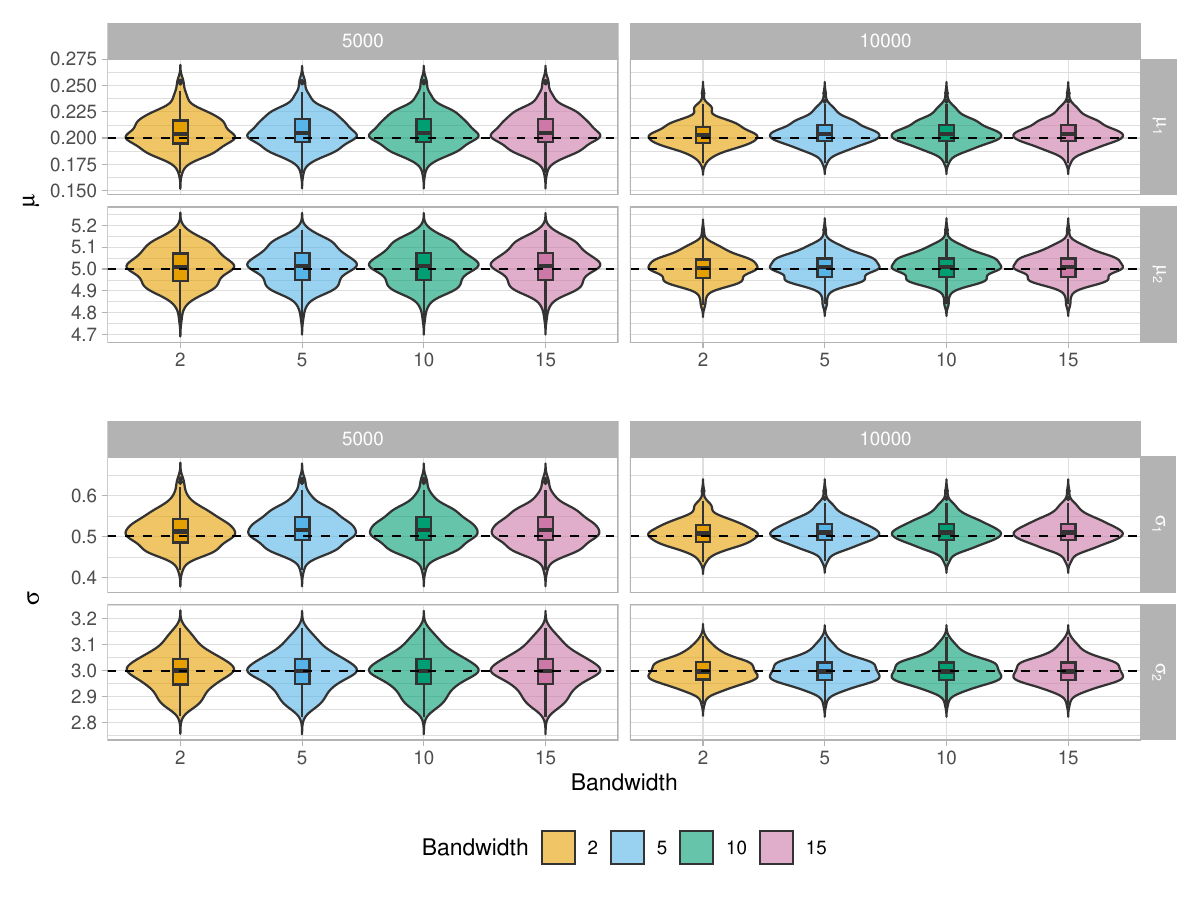}
    \caption{Violin plots and boxplots of the estimated state-dependent means and standard deviations for $T = 5000, 10000$ observations and bandwidth parameter $k = 2, 5, 10, 15$ in the simulation experiment.}
    \label{fig:sim_mu_sigma}
\end{figure}

\begin{figure}[h]
    \centering
    \includegraphics[width=1\linewidth]{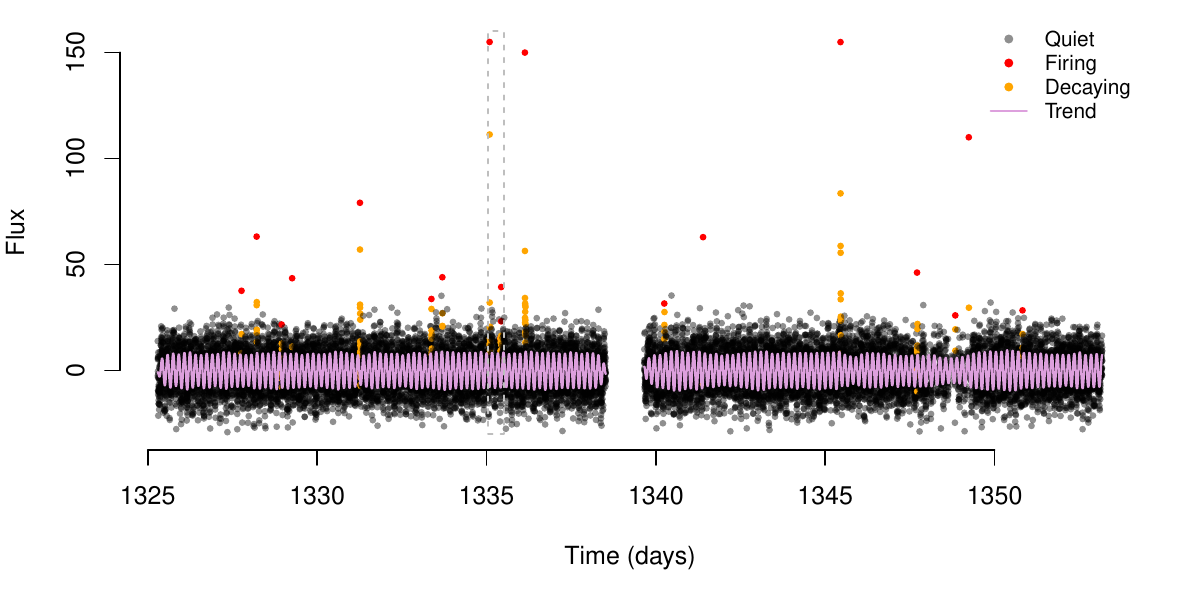}
    \caption{Complete time series of M dwarf TIC 031381302, coloured according to the most likely decoded state.
    The dashed gray box shows the section of the time series depicted in the main manuscript.}
    \label{fig:flares_full}
\end{figure}

\begin{figure}[h]
    \centering
    \includegraphics[width=1\linewidth]{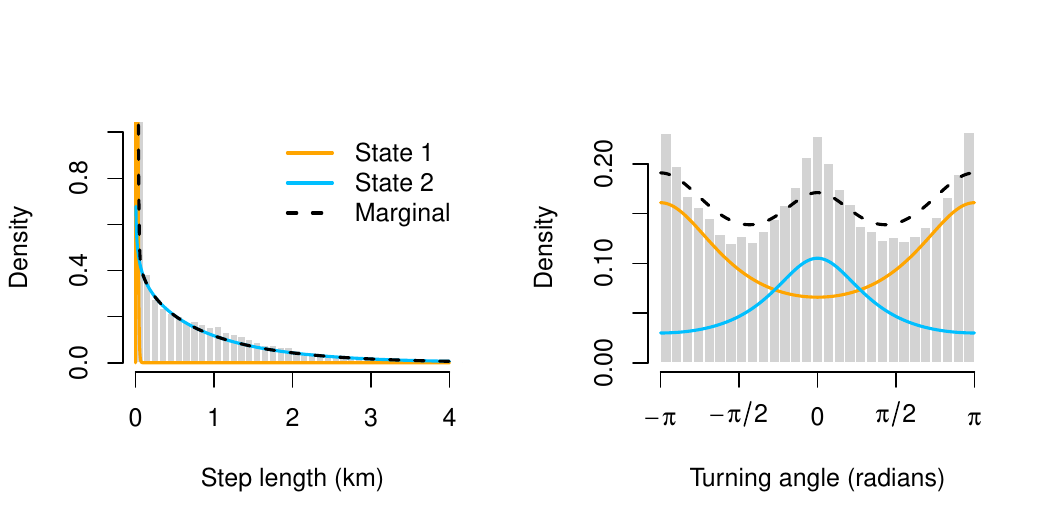}
    \caption{State-dependent distributions for the 2-state lion HMM, weighted by the approximate frequency of state occupancy (based on the decoded states).}
    \label{fig:statedep_lions}
\end{figure}

\begin{figure}[h]
    \centering
    \includegraphics[width=1\linewidth]{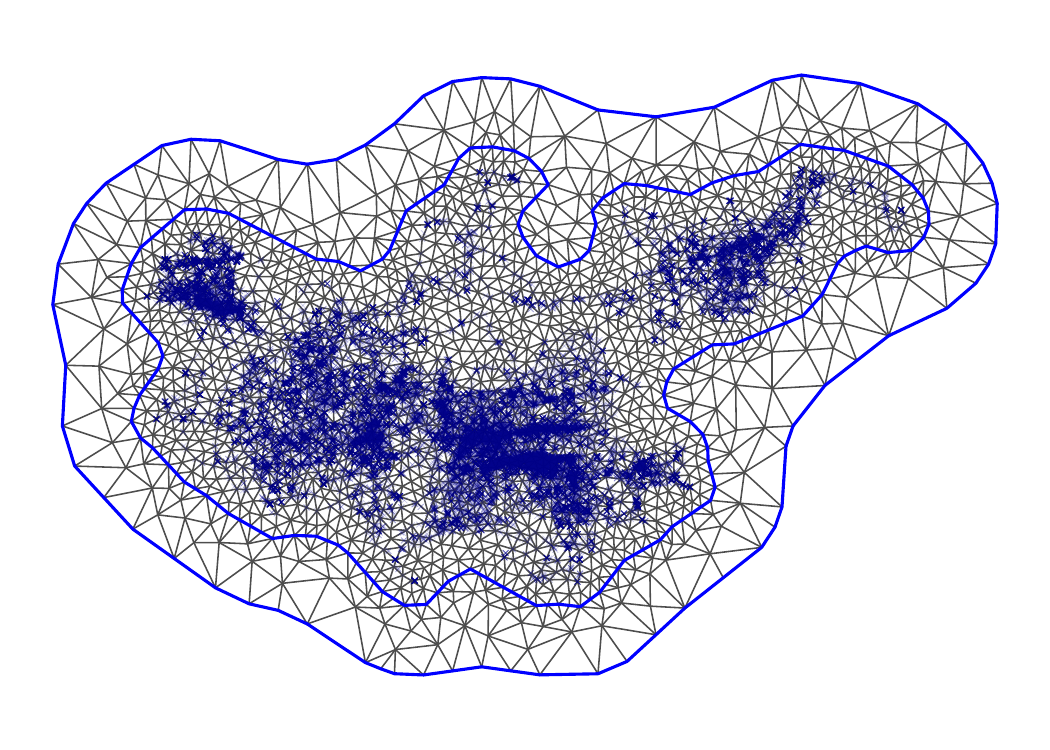}
    \caption{Triangulated mesh used to approximate the Gaussian field in the lion HMM by a GMRF. Observed locations are shown as dark blue crosses.}
    \label{fig:mesh}
\end{figure}

\end{document}